\theoremstyle{plain}
\newtheorem{theorem}{Theorem}[section]
\newtheorem{lemma}[theorem]{Lemma}
\newtheorem{corollary}[theorem]{Corollary}
\theoremstyle{definition}
\theoremstyle{remark}
\newtheorem{remark}[theorem]{Remark}
\newcommand{\Exp}{\mathbf{E}}
\newcommand{\Var}{\mathbf{Var}}
\newcommand{\Cov}{\mathbf{Cov}}
\newcommand{\cons}{\mathcal{C}}
\newcommand{\ind}{\mathbb{I}}
\DeclareMathOperator\supp{supp}
\icmltitlerunning{Integer Programming for Generalized Causal Bootstrap Designs}
\begin{document}

\twocolumn[
\icmltitle{Integer Programming for Generalized Causal Bootstrap Designs}



\icmlsetsymbol{equal}{*}

\begin{icmlauthorlist}
\icmlauthor{Jennifer Brennan}{google}
\icmlauthor{S\'{e}bastien Lahaie}{google}
\icmlauthor{Adel Javanmard}{google,usc}
\icmlauthor{Nick Doudchenko}{uber}
\icmlauthor{Jean Pouget-Abadie}{google}
\end{icmlauthorlist}

\icmlaffiliation{google}{Google Research}
\icmlaffiliation{usc}{University of Southern California}
\icmlaffiliation{uber}{Uber Technologies (work done while at Google Research)}

\icmlcorrespondingauthor{Jean Pouget-Abadie}{jeanpa@google.com}

\icmlkeywords{causal inference, experiment design, bootstrap}

\vskip 0.3in
]



\printAffiliationsAndNotice{}  

\begin{abstract}
  In experimental causal inference, we distinguish between two sources of uncertainty: design uncertainty, due to the treatment assignment mechanism, and sampling uncertainty, when the sample is drawn from a super-population. This distinction matters in settings with small fixed samples and heterogeneous treatment effects, as in geographical experiments. The standard bootstrap procedure most often used by practitioners primarily estimates sampling uncertainty, and the causal bootstrap procedure, which accounts for design uncertainty, was developed for the completely randomized design and the difference-in-means estimator, whereas non-standard designs and estimators are often used in these low-power regimes. We address this gap by proposing an integer program which computes numerically the worst-case copula used as an input to the causal bootstrap method, in a wide range of settings. Specifically, we prove the asymptotic validity of our approach for unconfounded, conditionally unconfounded, and individualistic with bounded confoundedness assignments, as well as generalizing to any linear-in-treatment and quadratic-in-treatment estimators. We demonstrate the refined confidence intervals achieved through simulations of small geographical experiments.
\end{abstract}

\section{Introduction}

Randomized controlled trials (RCTs) are used to estimate causal effects across the sciences, social sciences, and industry. An important complement to estimating causal effects is providing valid uncertainty quantification, which enables statistically sound decision-making between tested policies.
%
%
We distinguish between two types of uncertainty: \textit{sampling uncertainty} and \textit{design uncertainty}. \textit{Sampling uncertainty} captures the uncertainty associated with inferring statistics of a population given a random sample of that population.
\textit{Design uncertainty}, by contrast, captures the uncertainty associated with the random assignment of units to treatment and control conditions. Design uncertainty is associated with the \textit{finite population model} of causal inference, in which the universe consists of exactly $N$ units. Some units are observed under the treatment condition while others are observed under the control. The objective in the finite population model is to estimate the average difference between treated outcomes and control outcomes for these $N$ units, typically in the absence of distributional assumptions on these outcomes. Uncertainty arises from the fact that each unit is only observed under the treated \textit{or} controlled condition, so that any statistic derived from RCT data will vary slightly from one randomization to another.


Sampling uncertainty has received extensive attention in the statistics literature.
Well-known methods to estimate sampling uncertainty include the bootstrap \citep{efron1979bootstrap}, in which random draws from the sample mimic the population distribution, and Wald-type confidence intervals, which rely on asymptotic normality of the sample statistic.
The methodological toolkit available to quantify design uncertainty is considerably more sparse~\citep{Imbens2021-ep, aronow2014sharp}. For reasons explained in Appendix \ref{sec:neymansbound}, methods to estimate sampling uncertainty are typically conservative for design uncertainty. The gap widens for experiments with heterogeneous effects, small sample sizes, and a large fraction of the population included in the experiment. Even fewer methods apply to general designs such as those that balance covariates of the treated and control groups, which are widely used precisely in the settings with small experimental samples and heterogeneous treatment effects~\citep{harshaw2024balancing}.


In this paper, we propose a novel application of integer programming to identify the joint potential outcome distribution that maximizes the variance of the chosen estimator while being consistent with the randomization design and the observed marginal distributions of potential outcomes. Solving for this joint distribution \emph{numerically} is what allows us to extend prior work~\citep{aronow2014sharp, Imbens2021-ep} which relies on well-known results of the optimality of the isotone copula for the difference-in-means estimator and completely randomized assignment.  In particular, our method applies to all unconfounded (and known and probabilistic) assignment mechanisms, as well as any individualistic (and known and probabilistic) assignment mechanisms that verifies some bounded confoundedness. We examine conditionally unconfounded assignments as a special case. Furthermore, our method applies to any linear-in-treatment estimator, i.e. estimators of the form $\sum_i Z_i a_i + b_i$, and quadratic-in-treatment estimators, i.e. estimators of the form $\sum_i b_i + \sum_j Z_i Z_j a_{i,j}$, which include among others, difference-in-means, linear regression on the treatment variable and a covariate, and any doubly robust estimator fit out-of-sample. We provide asymptotic validity results and
%
illustrate the application of our method to a simulated geographical experiment, using real data from the International Monetary Fund (IMF). Our method provides similar coverage and generally tighter confidence intervals across a variety of settings when compared to a widely used, but more conservative baseline.


In Section~\ref{sec:diff_in_means}, we define and introduce our method for the illustrative setting of the completely randomized assignment and the difference-in-means estimator. 
In Section~\ref{sec:generalizing-to-new-estimators}, we show that the same approach can be generalized to linear- and quadratic-in-treatment estimators.
In Section~\ref{sec:generalizing-to-new-mechanisms}, we show our method applies and is asymptotically valid for unconfounded assignments, conditionally unconfounded assignments, and individualistic with bounded confoundedness assignments.
We conclude in Section~\ref{sec:simulation} with practical simulations on real data.

\subsection{Related works}

The quantification of design uncertainty traces its roots to \citet{neyman1990application}, who studied the difference-in-means estimator under complete randomization in the finite population setting. 
Later work identified tight upper bounds for binary \citep{robins1988confidence}, interval \citep{stoye2010partial}, and continuous \citep{aronow2014sharp} outcomes. \citet{Imbens2021-ep} extend the method of \citet{aronow2014sharp} with a resampling procedure they call the \textit{causal bootstrap}, providing a more practical confidence interval construction.

Tighter variance estimates can be obtained when predictions are paired with informative covariates. \citet{ding2019decomposing} and \citet{wang2020sharp} extend the work of \citet{aronow2014sharp} to the difference-in-means estimator in the presence of covariates, while \citet{fan2010sharp} propose the fewer-than-\textit{n} bootstrap method in the presence of covariates. Another body of literature studies design uncertainty for linear regression estimators. \citet{freedman2008regression}, \citet{schochet2010regression}, \citet{samii2012equivalencies}, and \citet{lin2013agnostic} characterize the contexts in which classical sampling-based variance estimators are valid in the design setting, while \citet{abadie2014inference} and \citet{fogarty2018regression} propose regression variance estimators specific to the finite-population setting. \citet{abadie2020sampling} propose a novel variance estimator for linear regression under both design and sampling uncertainty that trades off between a classical ordinary least squares variance estimator and a novel design-based estimator based on the fraction of the population that is sampled. 

Other studies expand design uncertainty estimation beyond Bernoulli or completely randomized designs. \citet{dasgupta2015causal} and \citet{lu2016randomization} study design uncertainty under factorial designs. \citet{imai2008variance} extends Neyman's upper bound to the matched-pairs design, which is a special case of covariate-balancing designs; \citet{fogarty2018regression} further extends this result to regression adjustment for matched pairs. 

Neyman's original variance estimator is unbiased only for homogeneous (``strictly additive") treatment effects. \citet{mukerjee2018using} identify a less restrictive set of constraints on the potential outcomes and experimental design under which the variance of certain estimators is unbiasedly estimable. This approach relies on a ``Q decomposition'' of the variance, where the matrix Q must be derived for each experimental design. The approach fails for certain designs including matched pairs, where units in the same pair cannot be assigned the same treatment. \citet{chattopadhyay2024neymanian} introduce two new variance decompositions for general designs and illustrate their application to completely randomized designs.

Like in \citep{aronow2014sharp}, our approach relies on identifying the variance-maximizing coupling of treated and control observations.  
We then apply the causal bootstrap method of \citet{imbens2015causal} to generate confidence intervals.
Our main contribution to these works is to generalize to new estimators and assignment mechanisms through the formulation of an integer program, which no longer requires the worst-case coupling to be known in closed-form.
%
In a related optimization framework, \citet{ji2023model} show how to estimate bounds on a functional of the joint distribution via optimal transport. They do not explicitly solve for the optimal coupling, but their dual approach is robust to misspecification of marginal treated and control outcome probabilities in the presence of covariates.
Finally, \citet{harshaw2021optimized} propose a minimal variance bound among all quadratic forms through a convex optimization procedure. 
Their method yields interesting admissibility results but only considers quadratic bounds. Our approach, inspired by the Frechet-Hoeffding bound approach used by~\citet{aronow2014sharp} and~\citet{Imbens2021-ep}, avoids this limitation.

\section{Our causal bootstrap procedure}
\label{sec:diff_in_means}


\subsection{Notation and background}
\label{sec:neymansbound}
To build intuition, we briefly review Neyman's conservative bound and follow-up work on the difference-in-means estimator under a completely randomized assignment. We consider a finite-population potential outcomes model in which each unit $i\in [N]$ is assigned to treatment ($Z_i=0$) or control ($Z_i=1$), and outcome $Y_i(Z_i)$ is observed. The average treatment effect is defined as $\tau = \tfrac{1}{N}\sum_{i\in[N]} Y_i(1) - Y_i(0)$. 
Let $N_1$ and $N_0$ be the number of treated and controlled units respectively. Define the difference in means estimator 
$\hat\tau_{\textrm{DIM}} = \tfrac{1}{N_1}\sum_{i: Z_i = 1} Y_i - \tfrac{1}{N_0}\sum_{i: Z_i = 0} Y_i$ 

and the \textit{completely randomized design} that assigns exactly $N_1$ units to treatment and $N_0$ units to control. \citet{neyman1990application} decomposed the variance of this estimator as
\begin{align}
    \Var_Z[\hat \tau_{\textrm{DIM}}] = \frac{S_1^2}{N_1} + \frac{S_0^2}{N_0} - \frac{S_\tau^2}{N},\label{eqn:variance_decomp}
\end{align}
where $S_z^2$
is the sample variances of the treated ($z=1$) and control ($z=0$) units, while
%
$S_\tau^2$
%
is the sample variance of the unit-level treatment effects $Y_i(1) - Y_i(0)$. See Appendix~\ref{sec:appendix:full-dim-formulas} for the full formulas.
$S_\tau^2$ depends on the unknown joint distribution between treated and control outcomes since it contains terms $Y_i(1)Y_i(0)$, making it impossible to estimate consistently from experimental data in which only one of $Y_i(1)$ or $Y_i(0)$ is ever observed. While Neyman used the fact that variances are nonnegative to conservatively bound $S_\tau^2 \geq 0$, thus proposing an upper-bound of $\Var_Z[\hat \tau_{\textrm{DIM}}]$,
this bound can be improved by 
identifying a joint distribution that maximizes $S_\tau^2$ while matching the observed marginal distributions of $Y(0)$ and $Y(1)$. \citet{aronow2014sharp} identify this variance-maximizing distribution to be the \textit{isotone (assortative) copula}, which pairs the smallest treated outcome with the smallest control outcome, then the second-smallest, and so on. They derive analytical variance bounds using that joint distribution, while
\citet{Imbens2021-ep} propose a bootstrap-style approach, imputing the counterfactual outcomes using the isotone copula and resampling those imputed outcomes to generate bootstrap samples.

\subsection{Our approach}

In this work, we propose to identify the variance-maximizing joint distribution numerically using optimization instead of analytical derivations, allowing us to extend the least favorable copula approach to a broader class of estimators and assignments. 
Our goal is to find the joint distribution of potential outcomes $F_{0,1}$ which maximizes the variance of some estimator $\hat \tau$ under some assignment mechanism $Z$ and with some constraint $\cons$ on the shape of this joint distribution.
\begin{eqnarray}
\label{eq:objective}
    \max_{Y_i(0), Y_i(1) \in \mathcal{Y}^2 } &  \Var_Z[\hat \tau] & \\
    \mbox{s.t.} & F_{0,1} \in \cons & \nonumber
\end{eqnarray}
There are multiple ways of constructing confidence intervals from the solution to this optimization objective. We propose to follow a similar procedure to the causal bootstrap method proposed by~\citet{Imbens2021-ep}: we impute unobserved potential outcomes using the least-favorable copula obtained above and drawing bootstrap samples via rerandomization of units to treatment. Confidence intervals are then constructed from the quantiles of this causal bootstrap distribution. We refer the reader to a full discussion in Appendix~\ref{sec:appendix:constructing-cis}.

We propose to solve the least-favorable copula problem numerically, by formulating the optimization problem in Eq.~\eqref{eq:objective} in a solver-friendly way. 
To build intuition, we first consider the difference-in-means estimator. We will explore generalizations to a wider class of estimators in Section~\ref{sec:linear-in-treatment} and~\ref{sec:fit_out_of_sample}. 
The variance of the difference-in-means estimator has a quadratic closed-form for general assignments: $\Var_Z[\hat\tau] : \{Y_i(1), Y_i(0)\} \rightarrow \mathbf{\tilde Y}^T \mathbf{\Sigma}_{ZZ} \mathbf{\tilde Y}$, where $\mathbf{\tilde Y}$ is the vector of coordinates $\{N_1^{-1} Y_i(1) + N_0^{-1}Y_i(0)\}_{i = 1\dots N}$\,, and
$\mathbf{\Sigma}_{ZZ}$ is the covariance matrix of the random vector $\{Z_i\}_{i = 1\dots N}$, with coordinates $(\mathbf{\Sigma}_{ZZ})_{ij} = \Cov[Z_i, Z_j]$. While it may be tempting to use this formulation directly,
$\mathbf{\Sigma}_{ZZ}$ is positive semi-definite, making the objective straightforward to minimize, but more difficult to maximize under constraints, which is our goal~\eqref{eq:objective}.

Instead, we propose the following formulation: consider the indicator variables $X^{(a)}_{ik}$ with $a \in \{0,1\}$ indexing the treatment assignment, $i \in [1,N]$ indexing the units in the population,  and $y_k$ the $k^{th}$ outcome of $\mathcal{Y}$: $ X^{(a)}_{ik} := \ind( Y_i(a) = y_k)$\,.
%
%
In particular, this notation implies that $\mathcal{Y}$ is discrete. For continuous outcomes, $\mathcal{Y}$ can be made discrete by replacing it with the support of observed outcomes. We provide asymptotic validity results of this approach in Section~\ref{sec:asymptotics}, though nothing prevents practitioners from expanding this discrete support of outcomes---beyond computational scalability, which we cover in Section~\ref{sec:simulation}---to cover other unseen outcomes.
We can rewrite the expression of the variance of the difference-in-means estimator under general assignments as a quadratic form:

\begin{equation}
\label{eq:variance-2}
  \Var_Z[\hat \tau] = \mathbf{X^T}  \underbrace{\mathbf{Y^T} \mathbf{\Sigma}_{ZZ}\mathbf{Y}}_\mathbf{Q} \mathbf{X} = \mathbf{X^T Q X}\,,
\end{equation}
%
where $\mathbf{X} \in \{0,1\}^{(2\cdot N \cdot K)}$ and $\mathbf{Y} \in \mathbb{R}^{N \times (2 N \cdot K)}$ are a vector and matrix representation of the indicator variables 
 and potential outcomes respectively; cf. Appendix~\ref{sec:appendix-ip-conversion} for details.

\subsection{Adding constraints on the potential outcomes}
\label{sec:constraints}

An important part of our proposed algorithm is imposing realistic constraints on the missing potential outcomes that do not jeopardize the asymptotic validity of our bound while reducing the value of our objective~\eqref{eq:objective}, hence making our bound tighter. We constrain each unit $i$'s potential outcome $Y_i(a)$ to only match one possible value $y_k$, leading to the constraint $\sum_k X^{(a)}_{ik} = 1$. Additionally, because we observe $N$ of the $2N$ potential outcomes, we set $X^{(Z_i)}_{ik} = 1$ if and only if $Y_i^{obs} = y_k$.
%

A clever observation by \citet{aronow2014sharp} 
is that the observed potential outcomes 
allow us to impose further constraints on the missing potential outcomes.
While they focus entirely on the completely randomized assignment, which we use now to build intuition, we will generalize our results to other assignments in Section~\ref{sec:known_and_proba}. If all units receive treatment with the same probability (different from 1 or 0), we expect the treated observed $F^{obs}_1$ and missing $F^{mis}_1$ marginals to match and the controlled observed $F^{obs}_0$ and missing $F^{mis}_0$ marginals to match.
%
%
%
Formally, for any outcome $y$ and treatment assignment value $a \in \{0,1\}$, consider 
\begin{align*}
    F^{obs}_a(y)   & := \frac{1}{N_a}\sum_{i=1}^N \ind(Z_i=a, Y_i(a) = y) \,, \\ F^{mis}_a(y) & := \frac{1}{N_{1-a}}\sum_{i=1}^N \ind(Z_i=1-a, Y_i(a) = y)\,.
\end{align*}
%
%
For a completely randomized assignment, each observed and missing marginal is equal in expectation to the true underlying marginal $F_a$. In other words,
we have the following equality for any potential outcome $y$:
\begin{equation*}
        \mathbb{E}_Z[ F_a^{mis}(y)]  = \underbrace{\frac{1}{N} \sum_i \ind(Y_i(a) = y)}_{F_a(y)}
        = \mathbb{E}_Z[F_a^{obs}(y)]\,.
\end{equation*}
%
%
%

In our chosen parameterization, this equality on the marginal distributions can be written as a linear constraint on the variables $(X^{(a)}_{ik})$:
\begin{equation*}
    \forall a\in\{0,1\}, \forall k,~\sum_{i=1}^N \frac{Z_i}{N_1} X^{(a)}_{ik} - \frac{1 - Z_i}{N_0} X^{(a)}_{ik} = 0
\end{equation*}
%


\subsection{Our proposed integer linear program}
\label{sec:constructing-the-ip}

We now rewrite our objective in Eq.~\eqref{eq:objective} as an integer program.
For the completely randomized assignment and the difference-in-means estimator, we know from prior work that its solution is the isotone copula. The value of the integer program formulation will become clear as we generalize it beyond the completely randomized assignment, and beyond the difference-in-means estimator---settings for which closed-form solutions are not always known. 

When indicated, the equations below are defined for all $a \in \{0,1\}$, for all $i \in [1,N]$, and for all $y_k \in \mathcal{Y}$. 
As written, the optimization has binary variables and its constraints are all linear, but the objective is quadratic. Because the product of two binary variables is also a binary variable, the objective can be rewritten as a linear function of binary variables, for which powerful solvers are available, using standard transformations; details are provided in Appendix~\ref{sec:appendix-ip-conversion}.
\begin{align}
     \max \quad  & \mathbf{X^T Q X} \quad \text{~for~} \mathbf{X} \in \{0,1\}^{N \cdot |\mathcal{Y}|} \label{eq:solver-program}\\
     \forall i, k, \quad & X_{ik}^{(Z_i)} = 1 \text{~iff~} Y_i^{obs} = y_k  \tag{a} \\ 
     \forall a, i, k, \quad & X_{ik}^{(a)} = 0 \text{~iff~} y_k \notin \supp(F_a)  \tag{b} \\
     \forall a, i,\quad & \sum_{k=1}^K X^{(a)}_{ik} = 1 \tag{c} \\
     \forall a, k, \quad & \sum_{i=1}^N  X^{(a)}_{ik} \left(\frac{Z_i}{N_1}  - \frac{1 - Z_i}{N_0} \right) = 0 \tag{d}
\end{align}

Constraint~(\ref{eq:solver-program}.b) is optional and allows us to account for practical assumptions we may have on the support of $F_a$. For example, we may be able to assume that $F_a \subset F_a^{obs}$, further reducing our estimate. Constraint~(\ref{eq:solver-program}.d) is not always feasible, when taken in conjunction with the other constraints on $(X^{(a)}_{ik})$. To ensure feasibility, we can relax these constraints for some chosen slackness parameter $\epsilon > 0$,
\begin{equation*}
   \forall a\in\{0,1\}, \forall k,~\left| \sum_{i=1}^N \frac{Z_i}{N_1} X^{(a)}_{ik} - \frac{1 - Z_i}{N_0} X^{(a)}_{ik} \right| \leq \epsilon.
\end{equation*}
%
The regime of feasibility is captured in the following Lemma, a proof of which can be found in Appendix~\ref{sec:appendix:lemma-1}.
\begin{lemma} \label{lem:epsilon-B} For $\epsilon = 0$, there may be no solution to the constrained optimization below~\eqref{eq:solver-program}, unless $N_0 = N_1$. It admits a feasible solution for $\epsilon \ge 1/\min(N_0,N_1).$
\end{lemma}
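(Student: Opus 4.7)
The plan is to reduce feasibility to a simple counting problem on the imputed marginals. For each candidate outcome $y_k$, let $n_k^{(0)}$ (respectively $n_k^{(1)}$) denote the number of control (treated) units with $Y_i^{\text{obs}} = y_k$, and define the imputation counts $t_k := \sum_{i:Z_i=1} X^{(0)}_{ik}$ and $c_k := \sum_{i:Z_i=0} X^{(1)}_{ik}$. The binary/normalization constraints $X^{(a)}_{ik}\in\{0,1\}$ with $\sum_k X^{(a)}_{ik} = 1$, together with the fixing of $X^{(Z_i)}_{ik}$ to the observed outcome, reduce the slackness inequality to the per-value condition
\begin{equation*}
\left|\tfrac{t_k}{N_1} - \tfrac{n_k^{(0)}}{N_0}\right| \le \epsilon
\quad\text{and}\quad
\left|\tfrac{n_k^{(1)}}{N_1} - \tfrac{c_k}{N_0}\right| \le \epsilon,
\end{equation*}
where $t_k, c_k \in \mathbb{Z}_{\ge 0}$ satisfy $\sum_k t_k = N_1$ and $\sum_k c_k = N_0$. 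The entire lemma reduces to analyzing when such nonnegative integer counts exist.

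For the infeasibility half, setting $\epsilon = 0$ forces $t_k = N_1 n_k^{(0)}/N_0 \in \mathbb{Z}$ for every $k$, which fails generically when $N_0 \nmid N_1$. A minimal counterexample is $N_0 = 2$, $N_1 = 3$, with the two control units having distinct outcomes, so $n_1^{(0)} = n_2^{(0)} = 1$ and we would need $t_1 = t_2 = 3/2$. This shows there exist instances with $N_0 \neq N_1$ in which the $\epsilon = 0$ problem is infeasible.

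For the feasibility half, I construct $t_k$ and $c_k$ explicitly by rounding. Set $t_k := \lfloor N_1 n_k^{(0)}/N_0 \rfloor$; the unrounded targets sum to $N_1$ exactly, so the nonnegative integer deficit $N_1 - \sum_k t_k$ is at most $K - 1$ (where $K = |\supp(F_0) \cup \supp(F_1)|$), and it can be absorbed by incrementing that many distinct $t_k$'s by one. This gives $|t_k - N_1 n_k^{(0)}/N_0| < 1$, hence $|t_k/N_1 - n_k^{(0)}/N_0| < 1/N_1$. Constructing $c_k$ symmetrically yields $|c_k/N_0 - n_k^{(1)}/N_1| < 1/N_0$. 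Both bounds therefore hold whenever $\epsilon \ge 1/\min(N_0, N_1)$, and any arbitrary assignment of imputed outcomes to units that realizes the counts $(t_k)$, $(c_k)$ produces a feasible $X$ satisfying all the remaining IP constraints.

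The only subtlety, rather than a true obstacle, is that the rounded counts must still sum to $N_1$ and $N_0$ exactly; this is handled by the identity $\sum_k N_1 n_k^{(0)}/N_0 = N_1$ (and its analog), which ensures the residual correction is small enough that it can be distributed without violating the per-bin error bound. Everything else is routine bookkeeping once the reduction to $(t_k, c_k)$ is in place.
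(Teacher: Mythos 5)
Your proof takes essentially the same route as the paper's: both establish infeasibility at $\epsilon = 0$ via an integrality obstruction (the imputed count $t_k$ would have to equal the non-integer $N_1 n_k^{(0)}/N_0$), and both establish feasibility at $\epsilon \ge 1/\min(N_0,N_1)$ by rounding the target counts down with $\lfloor\cdot\rfloor$ and distributing the integer remainder across distinct outcome values --- your count variables $(t_k, c_k)$ are exactly the cardinalities of the paper's sets $A_k$, and the reduction of constraint (d) to the per-value conditions on $(t_k, c_k)$ is correct. The one piece you omit is the ``unless $N_0 = N_1$'' clause: the lemma implicitly asserts that $\epsilon = 0$ is always feasible when the groups are balanced, which the paper proves by imputing each unit's counterfactual from a perfect matching with the opposite group. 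In your framework this is the one-line observation that $t_k = n_k^{(0)}$ and $c_k = n_k^{(1)}$ satisfy the constraints exactly when $N_0 = N_1$; with that added, your argument is complete.
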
 
As $\epsilon$ decreases, the worst case bound on the variance decreases, improving the performance of our estimator. We include further discussions on the choice of $\epsilon$ in Section~\ref{sec:asymptotics}.

\section{Generalizing to new estimators}
\label{sec:generalizing-to-new-estimators}

\subsection{Linear-in-treatment estimators}
\label{sec:linear-in-treatment}

So far, our results hold only for the difference-in-means estimator, which has been the primary object of study of previous work. In fact, they can be easily extended to a slightly more general class of estimators of the form: $\hat \tau := \sum_{i=1}^N Z_i a_i + b_i$\,,
%
%
where $a_i$ and $b_i$ are constants with respect to the assignment mechanism and linear in the potential outcomes. We refer to such estimators as being \emph{linear-in-treatment} (the second linearity assumption is then implied, for simplicity). Whether an estimator is linear-in-treatment depends on the assignment mechanism. Strictly speaking, the difference-in-means estimator is linear-in-treatment for a completely randomized assignment but not for a \emph{Bernoulli assignment}, since the normalization constants $N_1^{-1}$ and $N_0^{-1}$ are not constants with respect to $Z$. The Horvitz-Thompson estimator~\citep{horvitz1952generalization} is linear-in-treatment for either assignment mechanism:
\begin{equation*}
    \hat \tau 
    = \sum_i Z_i \underbrace{\left(\frac{Y_i(1)}{N \cdot P_i} + \frac{Y_i(0)}{N \cdot (1 - P_i)} \right)}_{a_i} + \underbrace{\frac{- Y_i(0)}{N \cdot (1 - P_i)}}_{b_i}\,,
\end{equation*}
where $P_i = \mathbb{P}(Z_i = 1)$ is the probability that unit $i$ receives treatment. For any assignment mechanism and linear-in-treatment estimator, $\Var_Z[\hat\tau] =  \sum_{i,j} a_i a_j \Cov[Z_i, Z_j] = \mathbf{X^T Q' X}$\,,
%
%
where $\mathbf{Q}' := \mathbf{Y^T} \mathbf{U^T} \mathbf{\Sigma}_{Z Z} \mathbf{U}\mathbf{Y}$, with $\mathbf{Y}$ and $\mathbf{X}$ defined similarly to Section~\ref{sec:constructing-the-ip}, and $\mathbf{U}$ such that $\mathbf{a} = \mathbf{U} \mathbf{Y}\mathbf{X}$, with $\mathbf{a}$ the vector of coordinates ${\{a_i\}}_{i \in[N]}$. The existence of $\mathbf{U}$ is given by the linearity of $a_i$ in the potential outcomes. The constraints in our integer program in Eq.~\ref{eq:solver-program} do not incorporate any estimator information, such that we can easily generalize our results to any linear-in-treatment estimator by replacing $\mathbf{Q}$ with $\mathbf{Q'}$.


\subsection{Fitting regressions out-of-sample}
\label{sec:fit_out_of_sample}

In most practical applications, we also have access to covariates, which improve the precision of effect estimates when they are predictive of the outcomes. 
In this section and the next, we consider ways to extend our integer program to estimators that include covariate information. 

One straightforward way to extend our previous analysis is to consider outcome-to-covariate functions that are fit out-of-sample. Consider the doubly-robust estimator as an illustrative example, where we replace the propensity scores by the true probabilities of each unit being treated, since we are motivated by randomized experiments where the probabilities are known. Let $\hat f_1$ and $\hat f_0$ be the \emph{predicted} outcomes using functions that are fit out-of-sample and do not depend on $\mathbf{Z}$, and $\hat \xi_1 := Y_i - \hat f_1(W_i)$ and $\hat \xi_0 := Y_i - \hat f_1(W_i)$ the corresponding residuals, such that we write

$\hat \tau  :=  \left\{\frac{1}{N} \sum_{i=1}^N  \hat f_1(W_i) + \frac{Z_i \hat \xi_1}{P_i} 
- \hat f_0(W_i) - \frac{(1- Z_i)\hat \xi_0}{1 - P_i}\right\}\,.$

 The variance of this estimator is equal to the variance of the Horvitz-Thompson estimator applied to the residuals $Y_i' \leftarrow Y_i - Z_i \hat f_1(W_i) - (1 - Z_i) \hat f_0(W_i)$.
When taking the variance of $\hat \tau$ with respect to $\mathbf{Z}$ and replacing $Y_i' \leftarrow Y_i - Z_i \hat f_1(W_i) - (1 - Z_i) \hat f_0(W_i)$, we have:
\begin{equation*}
    \Var_\mathbf{Z} [\hat\tau] = \Var_\mathbf{Z} \left[\frac{1}{N} \sum_{i=1}^N \frac{Z_i Y_i'}{P_i} - \frac{1}{N} \sum_{i=1}^N \frac{(1- Z_i) Y_i'}{1 - P_i} \right]
\end{equation*}
We recognize the Horvitz-Thompson estimator applied to the residuals $Y_i'$. 
We can apply our method to $Y_i \leftarrow Y_i'$. In particular, the constraints on the marginal distributions now constrain the \emph{residual} marginal distributions to match. 

There are many practical settings in which fitted-out-of-sample functions are available. For example, such functions may be fitted on the same cohort before the start of the experiment.  A typical and straightforward example is to use a pre-treatment-period value of the outcome as a plug-in estimator~\citep{abadie2005semiparametric, deng2013improving}. If we expect growth of this variable, we can also account for its rate of growth and rescale the plug-in estimator with its expected value; however, fitting this growth rate using data from the experimental period depends on the treatment assignment $\mathbf{Z}$ and should be avoided.
We include an example of this approach in Section~\ref{sec:simulation}.
Another option is to rely on sample-splitting, where we split units into $C$ cohorts, chosen independently of the treatment assignment $\mathbf{Z}$. In that case, we can estimate a (pair of) function(s) per cohort using data from all other cohorts. 
%





\subsection{Quadratic-in-treatment estimators
}
\label{sec:regression_estimators}

Repeating a similar approach to Section~\ref{sec:linear-in-treatment} for estimators of the form $\hat \tau = \sum_i b_i + \sum_j Z_i Z_j a_{ij}$, we obtain the following quadratic form in the coefficients $a_{ij}$ when computing their variance: $\Var_Z[\hat \tau] = \sum_{i,j,k,l} a_{ij} a_{kl} \Cov \left[ Z_i Z_j, Z_k Z_l\right]$\,.
%
%
We can then repeat the approach of Section~\ref{sec:linear-in-treatment} as long as the coefficients $a_{ij}$ continue to be constant with respect to the treatment assignment and linear in the potential outcomes. In particular, we show in Appendix~\ref{app:regression_estimators} that a linear regression fitting observed outcome $Y^{obs}_i$ to the treatment variable $Z$ and a scalar covariate $X$ can be written in this form, with coefficients $a_{ij}$ that are linear in the potential outcomes.
One advantage of this approach is that we are no longer required to do sample-splitting or use solely pre-period information to incorporate covariate information and reduce the size of our confidence intervals.

\section{Generalizing to new mechanisms}
\label{sec:generalizing-to-new-mechanisms}

We now show that our integer programming formulation in Eq.~\ref{eq:solver-program} can be generalized to a wider class of probabilistic assignment mechanisms for any of the estimators in Section~\ref{sec:generalizing-to-new-estimators}. 

\subsection{Known and Probabilitistic assignments}
\label{sec:known_and_proba}

%
%
Of the constraints in Eq.~\ref{eq:solver-program} requiring generalization, only (\ref{eq:solver-program}.d) is restricted to assignments having $P_i = p~\forall i$. In the more general setting of all \emph{known} and \emph{probabilistic} assignment, we no longer expect $F^{mis}_a$ and $F^{obs}_a$ to match exactly---or even approximately. 
If the treatment probabilities are known however, we can construct an unbiased estimator of the true marginal $F_a$ from both observed and missing potential outcomes. For $a\in\{0,1\}$ and $y_k \in \mathcal{Y}$, consider the marginal estimators:
\begin{align*}
    \hat F^{obs}_a(y_k) & := \frac{1}{N} \sum_i \frac{Z_i^a (1 - Z_i)^{1-a}}{P_i^a (1 - P_i)^{1-a}} X_{ik}^{(a)} \quad \text{and} \\
    \hat F^{mis}_a(y_k) & := \frac{1}{N} \sum_i \frac{Z_i^{1-a} (1 - Z_i)^a}{P_i^{1-a} (1 - P_i)^a} X_{ik}^{(a)}
\end{align*}
%
Through iterated expectations, it holds that $\Exp[\hat F^{obs}_a(y_k)] = \Exp[\hat F^{mis}_a(y_k)]$. 
%
%
We can thus reformulate constraints (d) as follows, for all $a, k, b \in \{0,1\} \times |\mathcal{Y}|\ \times \{0, 1\}$:
\begin{equation*}
    (-1)^{b} \sum_{i=1}^N X_{ik}^{(a)} \left(\frac{Z_i}{P_i} - \frac{1 - Z_i}{1-P_i} \right) \leq \epsilon N\,.
\end{equation*}
%
The resulting integer program applies to any known, and probabilistic assignment mechanism.

\subsection{Asymptotic Validity}
\label{sec:asymptotics}

Just because we can formulate (and solve) an integer program for any known and probabilistic assignment mechanism, does not mean our bound is asymptotically valid. Our procedure is \emph{guaranteed} to upper-bound the true variance only if the true joint distribution is also a feasible solution of our integer program. This could be violated if (a) the discrete support we choose, equal to the support of observed outcomes, does not also encompass the support of the missing potential outcomes, or (b) the realized distributions  $F_a^{mis}$ and $F_a^{obs}$ do not match exactly. Recall that we proposed relaxing constraint~$(\ref{eq:solver-program}.d)$ to hold with $\epsilon$ slackness to ensure feasibility of our integer program. In fact, 
by bounding the probability that the observed and missing marginal distributions are uniformly within an $\epsilon$ distance of each other, we can provide a rate for the probability that our procedure upper-bounds the true variance as $N$ grows:
\begin{theorem}
\label{theorem:main_text}
    For any probabilistic unconfounded assignment mechanism, $\mathbb{P}\left(V^* \ge \Var_Z[\hat{\tau}]\right) \ge 1- \beta\,,$
where $V^*$ is the output of our integer program, $\beta:= 8\exp\Big(-\frac{\epsilon^2}{4}N\tilde{P}\Big) + \frac{32}{N^2 \tilde{P}^2} \sum_{i,j\in[N]} \Cov(Z_i,Z_j)$, $\epsilon$ is the slackness parameter in $(\ref{eq:solver-program}.d)$, $\tilde{P} = \min(P,1-P)$, and $P = N^{-1}\sum_{i} P_i$\,. 
%
\end{theorem}
As expected, as the sample size $N$ and the slackness $\epsilon$ that we choose grows, the probability of upper-bounding the true variance increases.
We can therefore justify our approach theoretically for any \emph{unconfounded} (and known and probabilitistic) assignments. We will examine the special case of conditionally unconfounded assignments in Section~\ref{sec:conditionally-unconfounded}.  We can further refine the result above if the assignment mechanism is also individualistic:
\begin{corollary}
For any probabilistic, unconfounded, and individualistic assignment mechanism, Theorem~\ref{theorem:main_text} holds with $\beta = 8\exp\Big(-\frac{\epsilon^2}{4}N\tilde{P}\Big) + 8 \exp\Big(-\frac{1}{2}N\tilde{P}^2\Big)\,.$
\end{corollary}
%
What of confounded assignments? 
In fact, we can also prove our approach is asymptotically valid if the assignment is individualistic with bounded confoundedness, as defined by~\cite{tan2006distributional,de2024hidden}. Adopting similar notation to Theorem~\ref{theorem:main_text},
\begin{theorem}\label{theorem:main-confounded}
    Suppose that $(Y_i(0),Y_i(1),Z_i)$ are i.i.d across $i\in[n]$, and suppose that
\[
\left|\frac{\mathbb{P}(Z=a|Y(0), Y(1))}{\mathbb{P}(Z=a)} -1\right|\le \delta\,,
\]
for some $\delta>0$, such that $\delta$ controls the effect of confoundedness, with $\delta = 0$ corresponding to unconfoundedness. 
If the slackness satisfies $\epsilon\geq \delta$, then 
\begin{equation*}
    \mathbb{P}\left(V^* \ge \Var_Z[\hat{\tau}]\right) \ge 1- \beta\,, 
\end{equation*}
%
where $\beta:= 1- 8e^{-2N\epsilon_0^2} - 8e^{-N\tilde{P}\epsilon_1^2/3}$, $\epsilon_0 = \frac{(\epsilon-\delta)\tilde{P}(1+\delta)}{2+\delta+\epsilon}$, and $\epsilon_1 = \frac{\epsilon-\delta}{2+\delta+\epsilon}$\,.
%
\end{theorem}
Details can be found in Appendix~\ref{sec:appendix:finite-sample-analysis}. We leave the generalization to general confounded assignments to future work.

\subsection{Conditionally unconfounded assignments}
\label{sec:conditionally-unconfounded}

It is possible for the imbalance in $F^{mis}_a$ and $F^{obs}_a$ to be fully captured by an observed covariate $W_i$, as is the case for conditionally unconfounded assignments, which verify $Z_i \perp Y_i(1), Y_i(0) | W_i$. 
In that case, we have at least two options for formulating the integer program.
The first option, and the most straightforward one, is to re-use the integer program from Section~\ref{sec:known_and_proba} with $P_i = \mathbb{P}(Z_i | W_i = W_i)$. While our previous results follow through trivially, this approach does not leverage all the covariate information available.

A second approach that leverages the covariate information is to impose equality of the \emph{conditional} marginal distributions, reweighted by the suitable inverse probabilities. In other words, for all $a \in\{0,1\}, y_k \in \mathcal{Y}$ and values $w$ of covariate $W$, consider the conditional marginal estimators:
\begin{align*}
  \hat F^{obs}_a(y_k | w) & := \sum_{i: W_i = w} \frac{Z_i^a (1 - Z_i)^{1-a}}{P_i(w)^a (1 - P_i(w))^{1-a}} \frac{X_{ik}^{(a)}}{|\{w\}|}  \\
   \hat F^{mis}_a(y_k | w) & := \sum_{i: W_i = w}  \frac{Z_i^{1-a} (1 - Z_i)^{a}}{P_i(w)^{1-a} (1 - P_i(w))^{a}}\frac{X_{ik}^{(a)}}{|\{w\}|}
\end{align*}
Here, $P_i(w) := \mathbb{P}(Z_i = 1 | W_i= w)$ is the probability that unit $i$ receives treatment $Z_i = a$ given covariate $W_i = w$, and $|\{w\}|$ denotes the number of units with covariate $W_i = w$. As a result, the following equalities hold for all $(a, y_k, w)$ triplets: $\mathbb{E}_Z \left[\hat F^{obs}_a(y_k |w)\right] = \mathbb{E}_Z \left[\hat F^{mis}_a(y_k | w) \right]$\,.
%

These equalities can be expressed in a solver-friendly way, similar to the formulation of the~$(\ref{eq:solver-program}.d)$ constraints in Section~\ref{sec:known_and_proba}. Doing so allows us to incorporate covariate information into our variance estimator for tighter confidence intervals, since it adds additional constraints to the optimization problem. On the other hand, the empirical conditional marginal distribution $\hat{F}_a(y_k|w)$ converges to its expectation more slowly than the unconditional marginal, which will negatively affect the performance of our variance bound in small samples. This is especially true when $W$ is high-dimensional or continuous.



\section{Simulations on real data}
\label{sec:simulation}

{
\begin{table*}[ht]
\centering
\setlength\tabcolsep{6pt}  
\small
\begin{tabular}{lcclrrrr}
\toprule
\textbf{Covariates} & \multicolumn{2}{c}{\textbf{True CI}} & \textbf{Inference Method} & \multicolumn{2}{c}{\textbf{Coverage}} & \multicolumn{2}{c}{\textbf{Median CI}} \\
\cmidrule(lr){2-3} \cmidrule(lr){5-6} \cmidrule(lr){7-8}
     & Compl. R. & Matched Pairs & & Compl. R. & Matched Pairs & Compl. R. & Matched Pairs \\ 
\midrule
\multirow{4}{*}{None} & \multirow{4}{*}{3574} & \multirow{4}{*}{762} & Sampling Boot. & 90\% & 100\% & 3967 & 4110 \\
     & & & Conservative Var. & 98\% & 100\% & 3991 & 1152 \\
     & & & Isotone Copula & 87\% & 0\% & 3348 & 44 \\
     & & & Opt. Causal Boot. & 87\% & 100\% & 3348 & 2233 \\ \midrule
\multirow{4}{*}{2018 GDP} & \multirow{4}{*}{142} & \multirow{4}{*}{46} & Sampling Boot. & 91\% & 100\% & 141 & 142 \\
     & & & Conservative Var. & 94\% & 95\% & 143 & 49 \\
     & & & Isotone Copula & 90\% & 91\% & 133 & 41 \\
     & & & Opt. Causal Boot. & 90\% & 100\% & 133 & 102 \\
\bottomrule
\end{tabular}
\caption{``True CI'' and ``Median CI'' correspond to $95\%$ CI \emph{widths} in an A/A test under two designs and two estimators.}
\label{tab:aa}
\end{table*}
}

{
\begin{table*}[ht]
\centering
\setlength\tabcolsep{4pt}  
\small
\begin{tabular}{llcclrrrrr}
\toprule
\textbf{Covariates} & \textbf{Effect} & \multicolumn{2}{c}{\textbf{True CI}} & \textbf{Inference Method} & \multicolumn{2}{c}{\textbf{Median CI}} & \multicolumn{2}{c}{\textbf{Power}} \\
\cmidrule(lr){3-4} \cmidrule(lr){6-7} \cmidrule(lr){8-9}
& & Compl. R. & Matched Pairs & & Compl. R. & Matched Pairs & Compl. R. & Matched Pairs \\
\midrule
\multirow{8}{*}{None} & \multirow{4}{*}{2.5\%} & \multirow{4}{*}{3618} & \multirow{4}{*}{772} & Sampling Boot. & 4015 & 4160 & 0.10 & 0.00 \\
&    & & & Conservative Var. & 4040 & 1215 & 0.02 & 0.00 \\
&    & & & Isotone Copula & 3395 & -- & 0.13 & -- \\
&    & & & Opt. Causal Boot. & 3395 & 2256 & 0.13 & 0.00 \\
\cmidrule{2-9}
& \multirow{4}{*}{10\%} & \multirow{4}{*}{3752} & \multirow{4}{*}{800} & Sampling Boot. & 4181 & 4306 & 0.10 & 0.00 \\
&    & & & Conservative Var. & 4149 & 1464 & 0.02 & 0.00 \\
&    & & & Isotone Copula & 3540 & -- & 0.13 & -- \\
&    & & & Opt. Causal Boot. & 3540 & 2341 & 0.13 & 0.00 \\
\midrule
\multirow{8}{*}{2018 GDP} & \multirow{4}{*}{2.5\%} & \multirow{4}{*}{101} & \multirow{4}{*}{45} & Sampling Boot. & 111 & 109 & 0.32 & 0.20 \\
&    & & & Conservative Var. & 117 & 84 & 0.24 & 0.46 \\
&    & & & Isotone Copula & 99 & -- & 0.39 & -- \\
&    & & & Opt. Causal Boot. & 99 & 97 & 0.39 & 0.25 \\
\cmidrule{2-9}
& \multirow{4}{*}{10\%} & \multirow{4}{*}{90} & \multirow{4}{*}{64} & Sampling Boot. & 185 & 235 & 1.00 & 1.00 \\
&    & & & Conservative Var. & 189 & 310 & 1.00 & 0.93 \\
&    & & & Isotone Copula & 154 & -- & 1.00 & -- \\
&    & & & Opt. Causal Boot. & 154 & 203 & 1.00 & 1.00 \\
\bottomrule
\end{tabular}
\caption{``True CI'' and ``Median CI'' correspond to $95\%$ CI \emph{widths} for multiplicative effects under two designs and two estimators.}
\label{tab:ab}
\end{table*}
}

\subsection{Empirical Results}

Our approach is well-motivated for geographical experiments, which assign treatment and control to, e.g.\ countries or Designated Marketing Areas. This is due to the fact that geographical regions are fixed in number, meaning design uncertainty likely dominates sampling uncertainty. Additionally, there are often few available regions (on the order of hundreds) with significant heterogeneity in potential outcomes. As a result, (a) sample uncertainty quantification methods are poor approximations of the design uncertainty of our estimator, and (b) practitioners rely on non-standard designs and estimators to achieve statistical power. This is precisely the setting in which existing design uncertainty quantification methods fall short. 

\paragraph{Dataset} To work with a realistic dataset, we look to the International Monetary Fund's publicly-available Gross Domestic Product (GDP) country-level report for the years 2017--2019, restricted to the top 50 countries by GDP~\citep{imf_weo}. While a coordinated worldwide randomized experiment affecting GDP is implausible, we can make the more reasonable assumption that some global company's metric-of-interest in each country is proportional to that country's GDP, such that the simulation results would approximately hold if such a company were to run a country-level experiment. Our outcome of interest is GDP in 2019, with 2017--2018 GDP acting as pre-period covariates.

\paragraph{Settings} To showcase the validity and usefulness of our method, we compare two assignment strategies: the standard complete randomization  design (``Compl. R.'') treating exactly half of the countries, and a matched-pairs design (``Matched Pairs''). The matched-pairs design is constructed by matching the country with the largest GDP in 2018 with the country with the second largest GDP for that year, the third largest with the fourth largest, and so on. We consider two estimators: first, the standard ``no-covariates'' difference-in-means estimator, and second, the doubly-robust estimator from Section~\ref{sec:fit_out_of_sample}, using pre-period covariates from 2018 GDP adjusted for growth using the historical change between 2017 and 2018, as described in Section~\ref{sec:fit_out_of_sample}. 

\paragraph{Baselines} Finally, we compare our method to three baselines. Difference-in-means (and its doubly-robust variant) admits conservative, analytical bounds for its sampling variance under both complete randomization and matched-pairs; see Appendix~\ref{sec:cons-var-bounds}. We form symmetric, Neyman-style confidence intervals based on these bounds.
We next consider the standard sampling bootstrap, which samples with replacement $N$ units $1,000$ times, and reports the empirical confidence intervals as quantiles of the resulting estimator distribution. We also compare to the causal bootstrap method suggested by~\cite{aronow2014sharp, Imbens2021-ep}, which resamples from the isotone copula. This method is only theoretically justified in the complete randomization setting for the difference-in-means estimator. All reported confidence intervals have $95\%$ nominal coverage. We also report the ``true confidence intervals'' as defined by the $2.5\%$ and $97.5\%$ quantiles of the distribution of the estimate when resampling the assignment according to each design $500$ times.
\vspace{-.5em}

\paragraph{Analysis} Select results are summarized in Tables~\ref{tab:aa}--\ref{tab:ab}. A more complete set of results can be found in Appendix~\ref{sec:complete-simulation-results}, including implementation details and a reference to the open-source solver. For any experiment with complete randomization, our optimal causal bootstrap mechanism and the isotone copula bootstrap yield the same results down to each simulation as expected: our optimal causal bootstrap recovers the well-known variance-maximizing joint distribution exactly. Across all three inference methods and for the ground truth confidence intervals (CIs), the CIs are greatly reduced by incorporating pre-period information. These CIs are further reduced by the matched-pairs design, with more discussion in Appendix~\ref{sec:complete-simulation-results}


All three bootstrap methods achieve lower-than-nominal coverage for the completely randomized design, due to the heavy-tailed nature of GDP data, which causes $\hat{F}_a$ to converge slowly. This problem is particularly acute for the completely randomized design, which has high likelihood of assigning the largest GDP countries to the same treatment group. By contrast, the matched-pairs design balances the treated and controlled outcomes, improving the convergence of $\hat{F}_a$ and thus the coverage of the sampling bootstrap and our method. We provide more discussion of the observed undercoverage in Section~\ref{sec:appendix:undercoverage}.

The conservative variance estimators fare well at achieving close-to-nominal coverage. For the doubly robust estimator under matched-pairs---the setup which achieves the best power here---the conservative estimator has the narrowest CI widths for small effect sizes (less than 2.5\%), but for large effect size (over 10\%), its CI widths become larger than those of the sampling bootstrap. Our optimal causal bootstrap maintains a stable of improvement of at least 11\% in CI widths relative to the sampling bootstrap, across effects sizes. We emphasize that analytical variance bounds are not typically available for more complex designs, whereas our procedure extends as long as the treatment covariance matrix is available.

For the matched-pairs design, the isotone copula bootstrap is no longer known to be a conservative estimator. In practice, we find its resulting CIs are unrealistically narrow: in Table~\ref{tab:aa}, they never intersect with zero, leading to a coverage of 0\%. This is not surprising in a well-matched A/A setting where the difference of outcomes across pairs is larger than the difference within pairs. In that case, the isotone copula imputes the counterfactual outcome of each unit with the observed outcome of its paired unit. No matter how many times one samples from this imputed copula, the estimated effect within each pair will be the same, leading to an estimated variance of zero. However, the point estimate of our estimator can be non-zero, even for a well-matched A/A test, if the paired units are not exactly identical, leading to zero coverage. As a result, we do not report the median CIs of the isotone copula bootstrap in Table~\ref{tab:ab} for the matched-pairs design.
On the other hand, both the sampling bootstrap and our optimal causal bootstrap have satisfactory coverage in both complete randomization and matched pair designs. 
Across all experiments, our method reduces CI widths by at least $10\%$ compared to the naive bootstrap, with no loss in power or coverage, with an exceptional 50\% reduction in CI widths for the difference-in-means estimator with injected multiplicative effects. 

\subsection{Scalability Analysis}
\label{sec:scalability-analysis}

\begin{figure*}[t]  
    \centering
    \begin{minipage}[b]{0.48\textwidth}
        \centering
        \includegraphics[width=\linewidth]{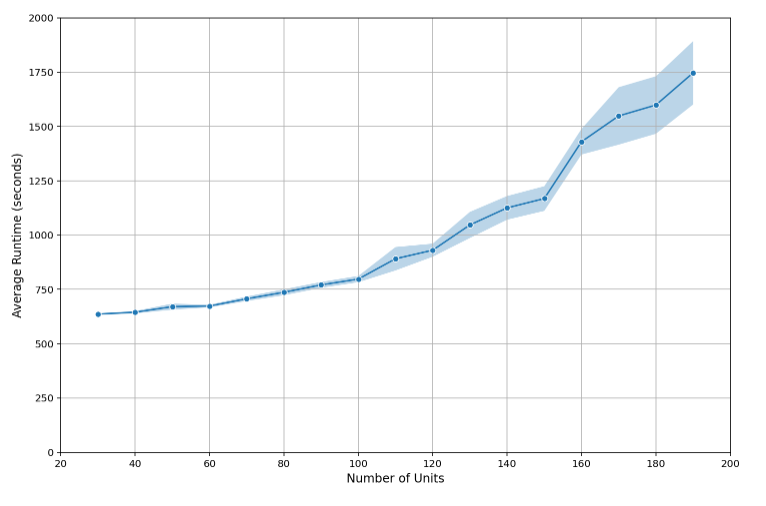}
        \vspace{-20pt}
        \caption{Runtime scaling for matched pairs, integer program formulation.}
        \label{fig:scaling-matched-pairs}
    \end{minipage}
    \hfill  
    \begin{minipage}[b]{0.48\textwidth}
        \centering
        \includegraphics[width=\linewidth]{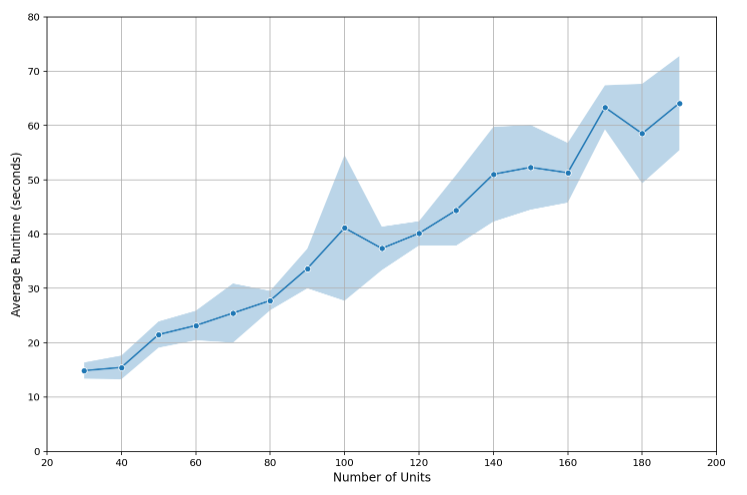}
        \vspace{-20pt}
        \caption{Runtime scaling for complete randomization, linear program formulation.}
        \label{fig:scaling-complete-randomization}
    \end{minipage}
\end{figure*}

As mentioned, we focus on the top $50$ countries by GDP for our simulation analysis. For a dataset of this size, our integer program took 671 seconds on average for the matched pairs design, and 1057 seconds on average for complete randomization. Matched pairs is faster because of the sparsity of its covariance matrix. However, complete randomization also admits a linear programming formulation for the worst-case copula, which took only 22 seconds on average to solve. We used this faster implementation in our experiments for the sake of expedience. 

To assess the scalability of our formulation, we solved for the worst-case copula under a null effect, varying the number of units.
Figures~\ref{fig:scaling-matched-pairs}--\ref{fig:scaling-complete-randomization} presents the results of our scaling analysis, where each data point in the plot represents an average over $10$ simulation runs, along with $95\%$ confidence intervals based on standard errors.


\section{Conclusion}
\label{sec:appendix:conclusion}

We propose an integer program approach to design-uncertainty quantification for linear- and quadratic-in-treatment estimators for known and probabilistic assignments, that are unconfounded, conditionally unconfounded, or individualistic with bounded confoundedness. Future work could look at new classes of estimators or provide asymptotic validity results for general confounded mechanisms. There is also the possibility of \emph{unknown} probabilistic conditionally unconfounded assignmentsm, in which case the probabilities $P_i$ would be replaced with estimated propensity scores. 
The validity of our approach would then depend on the quality of these estimates. 
The limitations of our work are around computational scaling, which is limited to---but also best motivated for---small datasets with hundreds of units.




\section*{Impact Statement}

This paper presents work whose goal is to advance the field of 
Machine Learning. There are many potential societal consequences 
of our work, none of which we feel must be specifically highlighted here.

\section*{Acknowledgments}
Adel Javanmard is supported in part by the NSF Award DMS-2311024, the Sloan fellowship in Mathematics, an
Adobe Faculty Research Award, an Amazon Faculty Research Award, and an iORB grant form USC Marshall School of Business. The authors are grateful to anonymous reviewers for their feedback on improving this paper.

\bibliography{references.bib}
\bibliographystyle{icml2025}

\newpage
\appendix
\onecolumn


\section{Appendix}
\label{sec:appendix}

\subsection{Complete Simulation Results}
\label{sec:complete-simulation-results}

Integer programs to compute the optimal causal bootstrap imputation were solved using the CP-SAT solver (\url{https://developers.google.com/optimization/cp/cp_solver}). Experiments were parallelized on a cloud CPU cluster using 100 workers. Runtime for the final experiments was approximately 2 hours. Smaller runs of approximately 10 minutes were used to debug the simulation code. Tables~\ref{fig:multiplicative-complete-results} and~\ref{fig:additive-complete-results} report on the complete simulation results on CI width and power for multiplicative and additive effects, respectively. Table~\ref{tab:aa} in the main body contains complete results for A/A tests.

\begin{table}[!ht]
\centering
\footnotesize
\begin{tabular}{lllrrrr}
\toprule
\textbf{Covariates} & \textbf{Effect Size} & \textbf{Inference Method} & \multicolumn{2}{c}{\textbf{Median CI}} & \multicolumn{2}{c}{\textbf{Power}} \\
\cmidrule(lr){4-5} \cmidrule(lr){6-7} 
     & & & Compl. R. & Matched Pairs & Compl. R. & Matched Pairs \\ 
\midrule
None & 82 & Sampling Boot. & 3967 & 4110  & 0.10 & 0.00 \\
None & 82 & Conservative Var. & 3991 & 1152  & 0.02 & 0.00 \\
None & 82 & Isotone Copula & 3348 & --  & 0.13 & -- \\
None & 82 & Opt. Causal Boot. & 3348 & 2230  & 0.13 & 0.00 \\
None & 164 & Sampling Boot. & 3967 & 4110  & 0.10 & 0.00 \\
None & 164 & Conservative Var. & 3991 & 1152  & 0.03 & 0.00 \\
None & 164 & Isotone Copula & 3348 & --  & 0.14 & -- \\
None & 164 & Opt. Causal Boot. & 3348 & 2230  & 0.14 & 0.00 \\
2018 GDP & 82 & Sampling Boot. & 141 & 142  & 0.61 & 0.78 \\
2018 GDP & 82 & Conservative Var. & 143 & 49  & 0.61 & 1.00 \\
2018 GDP & 82 & Isotone Copula & 133 & --  & 0.68 & -- \\
2018 GDP & 82 & Opt. Causal Boot. & 133 & 102  & 0.68 & 1.00 \\
2018 GDP & 164 & Sampling Boot. & 141 & 142  & 1.00 & 1.00 \\
2018 GDP & 164 & Conservative Var. & 143 & 49  & 1.00 & 1.00 \\
2018 GDP & 164 & Isotone Copula & 133 & --  & 1.00 & -- \\
2018 GDP & 164 & Opt. Causal Boot. & 133 & 101  & 1.00 & 1.00 \\
\bottomrule
\end{tabular}
\medskip
\caption{Complete results for $95\%$ CI widths and power for injected \textbf{additive} effects.}
\label{fig:additive-complete-results}
\end{table}

\if false
\begin{table}[!ht]
\centering
\footnotesize
\begin{tabular}{lllrrrrrr}
\toprule
\textbf{Covariates} & \textbf{Effect Size} & \textbf{Inference Method} & \multicolumn{2}{c}{\textbf{Med. CI}} & \multicolumn{2}{c}{\textbf{Med. CI \%}} & \multicolumn{2}{c}{\textbf{Power}} \\
\cmidrule(lr){4-5} \cmidrule(lr){6-7} \cmidrule(lr){8-9}
     & & & C.R. & Match. & C.R. & Match. & C.R. & Match. \\ 
\midrule
None & 82 & Sampling Boot. & 3967 & 4110 & 100\% & 100\% & 0.10 & 0.00 \\
None & 82 & Conservative Var. & 3991 & 1152 & 101\% & 28\% & 0.02 & 0.00 \\
None & 82 & Isotone Copula & 3348 & 44 & 84\% & 1\% & 0.13 & 1.00 \\
None & 82 & Opt. Causal Boot. & 3348 & 2230 & 84\% & 54\% & 0.13 & 0.00 \\
None & 164 & Sampling Boot. & 3967 & 4110 & 100\% & 100\% & 0.10 & 0.00 \\
None & 164 & Conservative Var. & 3991 & 1152 & 101\% & 28\% & 0.03 & 0.00 \\
None & 164 & Isotone Copula & 3348 & 44 & 84\% & 1\% & 0.14 & 0.96 \\
None & 164 & Opt. Causal Boot. & 3348 & 2230 & 84\% & 54\% & 0.14 & 0.00 \\
2018 GDP & 82 & Sampling Boot. & 141 & 142 & 100\% & 100\% & 0.61 & 0.78 \\
2018 GDP & 82 & Conservative Var. & 143 & 49 & 101\% & 35\% & 0.61 & 1.00 \\
2018 GDP & 82 & Isotone Copula & 133 & 41 & 95\% & 29\% & 0.68 & 1.00 \\
2018 GDP & 82 & Opt. Causal Boot. & 133 & 102 & 95\% & 72\% & 0.68 & 1.00 \\
2018 GDP & 164 & Sampling Boot. & 141 & 142 & 100\% & 100\% & 1.00 & 1.00 \\
2018 GDP & 164 & Conservative Var. & 143 & 49 & 101\% & 35\% & 1.00 & 1.00 \\
2018 GDP & 164 & Isotone Copula & 133 & 41 & 95\% & 29\% & 1.00 & 1.00 \\
2018 GDP & 164 & Opt. Causal Boot. & 133 & 101 & 95\% & 71\% & 1.00 & 1.00 \\
\bottomrule
\end{tabular}
\medskip
\caption{Complete results for $95\%$ CI widths and power for injected \textbf{additive} effects.}
\label{fig:additive-complete-results}
\end{table}
\fi

\if 0
\begin{table}[H]
\centering
\small
\begin{tabular}{lllrrrrrr}
\toprule
\textbf{Covariates} & \textbf{Effect Size} & \textbf{Inference Method} & \multicolumn{2}{c}{\textbf{Med. CI}} & \multicolumn{2}{c}{\textbf{Med. CI \%}} & \multicolumn{2}{c}{\textbf{Power}} \\
\cmidrule(lr){4-5} \cmidrule(lr){6-7} \cmidrule(lr){8-9}
     & & & C.R. & Match. & C.R. & Match. & C.R. & Match. \\ 
\midrule
None & 2.5\% & Sampling Boot. & 4015 & 4160 & 100\% & 100\% & 0.10 & 0.00 \\
None & 2.5\% & Isotone Copula & 3395 & 44 & 85\% & 1\% & 0.13 & 1.00 \\
None & 2.5\% & Opt. Causal Boot. & 3395 & 2256 & 85\% & 54\% & 0.13 & 0.00 \\
None & 5.0\% & Sampling Boot. & 4068 & 4206 & 100\% & 100\% & 0.10 & 0.00 \\
None & 5.0\% & Isotone Copula & 3447 & 45 & 85\% & 1\% & 0.13 & 1.00 \\
None & 5.0\% & Opt. Causal Boot. & 3447 & 2284 & 85\% & 54\% & 0.13 & 0.00 \\
None & 7.5\% & Sampling Boot. & 4121 & 4260 & 100\% & 100\% & 0.10 & 0.00 \\
None & 7.5\% & Isotone Copula & 3495 & 46 & 85\% & 1\% & 0.13 & 1.00 \\
None & 7.5\% & Opt. Causal Boot. & 3495 & 2314 & 85\% & 54\% & 0.13 & 0.00 \\
None & 10.0\% & Sampling Boot. & 4181 & 4306 & 100\% & 100\% & 0.10 & 0.00 \\
None & 10.0\% & Isotone Copula & 3540 & 46 & 85\% & 1\% & 0.13 & 0.97 \\
None & 10.0\% & Opt. Causal Boot. & 3540 & 2341 & 85\% & 54\% & 0.13 & 0.00 \\
2018 GDP & 2.5\% & Sampling Boot. & 111 & 109 & 100\% & 100\% & 0.32 & 0.20 \\
2018 GDP & 2.5\% & Isotone Copula & 99 & 66 & 89\% & 61\% & 0.39 & 0.69 \\
2018 GDP & 2.5\% & Opt. Causal Boot. & 99 & 97 & 89\% & 89\% & 0.39 & 0.25 \\
2018 GDP & 5.0\% & Sampling Boot. & 122 & 121 & 100\% & 100\% & 0.93 & 0.98 \\
2018 GDP & 5.0\% & Isotone Copula & 102 & 105 & 84\% & 86\% & 0.97 & 1.00 \\
2018 GDP & 5.0\% & Opt. Causal Boot. & 102 & 126 & 84\% & 104\% & 0.97 & 0.92 \\
2018 GDP & 7.5\% & Sampling Boot. & 154 & 169 & 100\% & 100\% & 1.00 & 1.00 \\
2018 GDP & 7.5\% & Isotone Copula & 126 & 148 & 82\% & 87\% & 1.00 & 1.00 \\
2018 GDP & 7.5\% & Opt. Causal Boot. & 126 & 162 & 82\% & 95\% & 1.00 & 1.00 \\
2018 GDP & 10.0\% & Sampling Boot. & 185 & 235 & 100\% & 100\% & 1.00 & 1.00 \\
2018 GDP & 10.0\% & Isotone Copula & 154 & 193 & 84\% & 82\% & 1.00 & 1.00 \\
2018 GDP & 10.0\% & Opt. Causal Boot. & 154 & 203 & 84\% & 87\% & 1.00 & 1.00 \\
\bottomrule
\end{tabular}
\medskip
\caption{Complete results for $95\%$ CI widths and power for injected \textbf{multiplicative} effects.}
\label{fig:multiplicative-complete-results}
\end{table}
\fi

\begin{table}[!ht]
\centering
\footnotesize
\begin{tabular}{lllrrrr}
\toprule
\textbf{Covariates} & \textbf{Effect Size} & \textbf{Inference Method} & \multicolumn{2}{c}{\textbf{Median CI}} &  \multicolumn{2}{c}{\textbf{Power}} \\
\cmidrule(lr){4-5} \cmidrule(lr){6-7} 
     & & & Compl. R. & Matched Pairs & Compl. R. & Matched Pairs \\ 
\midrule
None & 2.5\% & Sampling Boot. & 4015 & 4160  & 0.10 & 0.00 \\
None & 2.5\% & Conservative Var. & 4040 & 1215  & 0.02 & 0.00 \\
None & 2.5\% & Isotone Copula & 3395 & --  & 0.13 & -- \\
None & 2.5\% & Opt. Causal Boot. & 3395 & 2256  & 0.13 & 0.00 \\
None & 5.0\% & Sampling Boot. & 4068 & 4206 &  0.10 & 0.00 \\
None & 5.0\% & Conservative Var. & 4079 & 1298  & 0.01 & 0.00 \\
None & 5.0\% & Isotone Copula & 3447 & --  & 0.13 & -- \\
None & 5.0\% & Opt. Causal Boot. & 3447 & 2284  & 0.13 & 0.00 \\
None & 7.5\% & Sampling Boot. & 4121 & 4260  & 0.10 & 0.00 \\
None & 7.5\% & Conservative Var. & 4114 & 1381 & 0.02 & 0.00 \\
None & 7.5\% & Isotone Copula & 3495 & -- & 0.13 & -- \\
None & 7.5\% & Opt. Causal Boot. & 3495 & 2314 & 0.13 & 0.00 \\
None & 10.0\% & Sampling Boot. & 4181 & 4306  & 0.10 & 0.00 \\
None & 10.0\% & Conservative Var. & 4149 & 1464  & 0.02 & 0.00 \\
None & 10.0\% & Isotone Copula & 3540 & -- &  0.13 & -- \\
None & 10.0\% & Opt. Causal Boot. & 3540 & 2341  & 0.13 & 0.00 \\
2018 GDP & 2.5\% & Sampling Boot. & 111 & 109  & 0.32 & 0.20 \\
2018 GDP & 2.5\% & Conservative Var. & 117 & 84 &  0.24 & 0.46 \\
2018 GDP & 2.5\% & Isotone Copula & 99 & -- &  0.39 & -- \\
2018 GDP & 2.5\% & Opt. Causal Boot. & 99 & 97  & 0.39 & 0.25 \\
2018 GDP & 5.0\% & Sampling Boot. & 122 & 121  & 0.93 & 0.98 \\
2018 GDP & 5.0\% & Conservative Var. & 124 & 148  & 0.89 & 0.75 \\
2018 GDP & 5.0\% & Isotone Copula & 102 & --  & 0.97 & -- \\
2018 GDP & 5.0\% & Opt. Causal Boot. & 102 & 126  & 0.97 & 0.92 \\
2018 GDP & 7.5\% & Sampling Boot. & 154 & 169  & 1.00 & 1.00 \\
2018 GDP & 7.5\% & Conservative Var. & 152 & 228  & 1.00 & 0.88 \\
2018 GDP & 7.5\% & Isotone Copula & 126 & --  & 1.00 & -- \\
2018 GDP & 7.5\% & Opt. Causal Boot. & 126 & 162  & 1.00 & 1.00 \\
2018 GDP & 10.0\% & Sampling Boot. & 185 & 235  & 1.00 & 1.00 \\
2018 GDP & 10.0\% & Conservative Var. & 189 & 310  & 1.00 & 0.93 \\
2018 GDP & 10.0\% & Isotone Copula & 154 & -- &  1.00 & -- \\
2018 GDP & 10.0\% & Opt. Causal Boot. & 154 & 203 & 1.00 & 1.00 \\
\bottomrule
\end{tabular}
\medskip
\caption{Complete results for $95\%$ CI widths and power for injected \textbf{multiplicative} effects.}
\label{fig:multiplicative-complete-results}
\end{table}

\if false
\begin{table}[!ht]
\centering
\footnotesize
\begin{tabular}{lllrrrrrr}
\toprule
\textbf{Covariates} & \textbf{Effect Size} & \textbf{Inference Method} & \multicolumn{2}{c}{\textbf{Med. CI}} & \multicolumn{2}{c}{\textbf{Med. CI \%}} & \multicolumn{2}{c}{\textbf{Power}} \\
\cmidrule(lr){4-5} \cmidrule(lr){6-7} \cmidrule(lr){8-9}
     & & & C.R. & Match. & C.R. & Match. & C.R. & Match. \\ 
\midrule
None & 2.5\% & Sampling Boot. & 4015 & 4160 & 100\% & 100\% & 0.10 & 0.00 \\
None & 2.5\% & Conservative Var. & 4040 & 1215 & 101\% & 29\% & 0.02 & 0.00 \\
None & 2.5\% & Isotone Copula & 3395 & 44 & 85\% & 1\% & 0.13 & 1.00 \\
None & 2.5\% & Opt. Causal Boot. & 3395 & 2256 & 85\% & 54\% & 0.13 & 0.00 \\
None & 5.0\% & Sampling Boot. & 4068 & 4206 & 100\% & 100\% & 0.10 & 0.00 \\
None & 5.0\% & Conservative Var. & 4079 & 1298 & 100\% & 31\% & 0.01 & 0.00 \\
None & 5.0\% & Isotone Copula & 3447 & 45 & 85\% & 1\% & 0.13 & 1.00 \\
None & 5.0\% & Opt. Causal Boot. & 3447 & 2284 & 85\% & 54\% & 0.13 & 0.00 \\
None & 7.5\% & Sampling Boot. & 4121 & 4260 & 100\% & 100\% & 0.10 & 0.00 \\
None & 7.5\% & Conservative Var. & 4114 & 1381 & 100\% & 32\% & 0.02 & 0.00 \\
None & 7.5\% & Isotone Copula & 3495 & 46 & 85\% & 1\% & 0.13 & 1.00 \\
None & 7.5\% & Opt. Causal Boot. & 3495 & 2314 & 85\% & 54\% & 0.13 & 0.00 \\
None & 10.0\% & Sampling Boot. & 4181 & 4306 & 100\% & 100\% & 0.10 & 0.00 \\
None & 10.0\% & Conservative Var. & 4149 & 1464 & 99\% & 34\% & 0.02 & 0.00 \\
None & 10.0\% & Isotone Copula & 3540 & 46 & 85\% & 1\% & 0.13 & 0.97 \\
None & 10.0\% & Opt. Causal Boot. & 3540 & 2341 & 85\% & 54\% & 0.13 & 0.00 \\
2018 GDP & 2.5\% & Sampling Boot. & 111 & 109 & 100\% & 100\% & 0.32 & 0.20 \\
2018 GDP & 2.5\% & Conservative Var. & 117 & 84 & 105\% & 77\% & 0.24 & 0.46 \\
2018 GDP & 2.5\% & Isotone Copula & 99 & 66 & 89\% & 61\% & 0.39 & 0.69 \\
2018 GDP & 2.5\% & Opt. Causal Boot. & 99 & 97 & 89\% & 89\% & 0.39 & 0.25 \\
2018 GDP & 5.0\% & Sampling Boot. & 122 & 121 & 100\% & 100\% & 0.93 & 0.98 \\
2018 GDP & 5.0\% & Conservative Var. & 124 & 148 & 102\% & 122\% & 0.89 & 0.75 \\
2018 GDP & 5.0\% & Isotone Copula & 102 & 105 & 84\% & 86\% & 0.97 & 1.00 \\
2018 GDP & 5.0\% & Opt. Causal Boot. & 102 & 126 & 84\% & 104\% & 0.97 & 0.92 \\
2018 GDP & 7.5\% & Sampling Boot. & 154 & 169 & 100\% & 100\% & 1.00 & 1.00 \\
2018 GDP & 7.5\% & Conservative Var. & 152 & 228 & 99\% & 135\% & 1.00 & 0.88 \\
2018 GDP & 7.5\% & Isotone Copula & 126 & 148 & 82\% & 87\% & 1.00 & 1.00 \\
2018 GDP & 7.5\% & Opt. Causal Boot. & 126 & 162 & 82\% & 95\% & 1.00 & 1.00 \\
2018 GDP & 10.0\% & Sampling Boot. & 185 & 235 & 100\% & 100\% & 1.00 & 1.00 \\
2018 GDP & 10.0\% & Conservative Var. & 189 & 310 & 103\% & 132\% & 1.00 & 0.93 \\
2018 GDP & 10.0\% & Isotone Copula & 154 & 193 & 84\% & 82\% & 1.00 & 1.00 \\
2018 GDP & 10.0\% & Opt. Causal Boot. & 154 & 203 & 84\% & 87\% & 1.00 & 1.00 \\
\bottomrule
\end{tabular}
\medskip
\caption{Complete results for $95\%$ CI widths and power for injected \textbf{multiplicative} effects.}
\label{fig:multiplicative-complete-results}
\end{table}
\fi

\if 0
\begin{table}[H]
\centering
\small
\begin{tabular}{lllrrrrrr}
\toprule
\textbf{Covariates} & \textbf{Effect Size} & \textbf{Inference Method} & \multicolumn{2}{c}{\textbf{Med. CI}} & \multicolumn{2}{c}{\textbf{Med. CI \%}} & \multicolumn{2}{c}{\textbf{Power}} \\
\cmidrule(lr){4-5} \cmidrule(lr){6-7} \cmidrule(lr){8-9}
     & & & C.R. & Match. & C.R. & Match. & C.R. & Match. \\ 
\midrule
None & 82 & Sampling Boot. & 3967 & 4110 & 100\% & 100\% & 0.10 & 0.00 \\
None & 82 & Isotone Copula & 3348 & 44 & 84\% & 1\% & 0.13 & 1.00 \\
None & 82 & Opt. Causal Boot. & 3348 & 2230 & 84\% & 54\% & 0.13 & 0.00 \\
None & 164 & Sampling Boot. & 3967 & 4110 & 100\% & 100\% & 0.10 & 0.00 \\
None & 164 & Isotone Copula & 3348 & 44 & 84\% & 1\% & 0.14 & 0.96 \\
None & 164 & Opt. Causal Boot. & 3348 & 2230 & 84\% & 54\% & 0.14 & 0.00 \\
2018 GDP & 82 & Sampling Boot. & 141 & 142 & 100\% & 100\% & 0.61 & 0.78 \\
2018 GDP & 82 & Isotone Copula & 133 & 41 & 95\% & 29\% & 0.68 & 1.00 \\
2018 GDP & 82 & Opt. Causal Boot. & 133 & 102 & 95\% & 72\% & 0.68 & 1.00 \\
2018 GDP & 164 & Sampling Boot. & 141 & 142 & 100\% & 100\% & 1.00 & 1.00 \\
2018 GDP & 164 & Isotone Copula & 133 & 41 & 95\% & 29\% & 1.00 & 1.00 \\
2018 GDP & 164 & Opt. Causal Boot. & 133 & 101 & 95\% & 71\% & 1.00 & 1.00 \\
\bottomrule
\end{tabular}
\medskip
\caption{Complete results for $95\%$ CI widths and power for injected \textbf{additive} effects.}
\label{fig:additive-complete-results}
\end{table}
\fi

As a footnote, we observe that the ``true CI'' widths increase with effect size for both designs when there are no covariates, as expected. However, they sometimes decrease with effect size with covariates. This is due to the fact that the covariates model some growth in outcomes, and these models are not fit in-sample. Therefore, it is possible for variance reduction to be largest for nonzero effect sizes.

\clearpage

\subsection{Proof of Lemma~\ref{lem:epsilon-B} in Section~\ref{sec:constraints}}\label{app:epsilon-Bound}
\label{sec:appendix:lemma-1}

To show the first part of the Lemma~\ref{lem:epsilon-B}, we observe that for any outcome $y_k$, setting $\epsilon=0$ implies $F^{obs}_a(y_k) = F^{mis}_a(y_k)$. The observed distribution $F^{obs}_a(y_k)$ is already determined and not in our control. Further, $N_0 F^{obs}_1(y_k) = N_0 F^{mis}_1(y_k) = \sum_{i=1}^N (1-Z_i) X_{ik}^{(1)}$ which is an integer value. Since $F^{obs}_1(y_k)$ is already given, it may be that there exists an outcome $y_k$ such that $N_0 F^{obs}_1(y_k)$ is not an integer, in which case the $(d)$ constraints above are not feasible when $\epsilon=0$. 

We now tackle the second statement of Lemma~\ref{lem:epsilon-B}. The constraints corresponding to $a=1$ are decoupled from those corresponding to $a= 0$. Here, we focus on $a=1$ and the case of $a=0$ can be treated similarly.  Define the shorthand
\[
p_k:= F^{obs}_1(y_k) = \sum_{i=1}^N  X_{ik}^{(1)}\frac{Z_i}{N_1}\,,
\]
for every $y_k\in \supp(F_1^{obs})$. Note that the values of $p_k$ are given by constraint (a) and $\sum_{k} p_k = 1$. Let $m:= |\supp(F_1^{obs})|$ and $C$ be the set of controlled units ($|C| = N_0$). For each $k\in [m]$, we let $A_k\subset C$ be an arbitrary set with $|A_k| = \lfloor N_0 p_k\rfloor$, such that the sets $A_k$ are disjoint.  Note that
\[
|\cup_{k=1}^{m} A_k| = \sum_{k=1}^{m} |A_k| = \sum_{k=1}^{m} \lfloor N_0 p_k\rfloor \le N_0 \sum_{i=1}^{m} p_k = N_0\,.
\]
Therefore, the number of controlled units not covered by any of $A_k$ is at most
\[
r:= |C\backslash \cup_{i=1}^k A_k| = N_0 - \sum_{k=1}^{m} \lfloor N_0 p_k\rfloor = \sum_{k=1}^{m} (N_0 p_k - \lfloor N_0 p_k\rfloor)
\le \sum_{k=1}^{m} 1 = m\,.
\]
We next increase the size of $|A_1|, \dotsc, |A_{r}|$ by adding one element of $C\backslash \cup_{i=1}^k A_k$ to each of them (Note that this is possible since as we showed above $r\le m$). This way, we have 
\begin{align*}
|A_k| &= \lfloor N_0 p_k\rfloor +1, \quad \text{ for } k= 1, \dotsc, r, \\ 
|A_k| &= \lfloor N_0 p_k\rfloor, \quad \text{ for } k= r+1, \dotsc, m\,,
\end{align*}
the sets $A_k$ are disjoint and their union covers $C$. 

We are now ready to construct a feasible solution as follows. For $i\in C$ (i.e., a controlled unit), we set
\begin{equation*}
    X_{ik}^{(1)} = \begin{cases}
    1 & i\in A_k,\\
    0 & \text{otherwise.}
    \end{cases}
\end{equation*}
In addition, for $i\in C^c$ (i.e., a treated unit), we set
\begin{align}\label{eq:X-obs}
X_{ik}^{(1)} = 1 \quad \text{iff}\quad Y_i(1) = y_k\,.
\end{align}
Constraints (a) and (c) are clearly satisfied as $X_{ik}^{(1)}$ are binary variables that are zero outside $\supp(F^{obs}_1)$. Condition $(a)$ is also satisfied by ~\eqref{eq:X-obs} and noting that for $i\in C^c$, we have $Y_i^{obs} = Y_i(1)$.  To verify constraint (c), note that if $i\in C$, only one of $X_{ik}^{(1)}$ is one and the rest are zero because the sets $A_k$ are disjoint. The same holds for $i\in C^c$ clearly by the construction in~\eqref{eq:X-obs}. 

We next verify constraint (d). We have
\begin{align*}
    \sum_{i=1}^N X_{ik}^{(1)}\frac{Z_i}{N_1} - \sum_{i=1}^N X_{ik}^{(1)}\frac{1-Z_i}{N_0}
    &= p_k - \sum_{i\in C}\frac{1}{N_0} X_{ik}^{(1)}
    = p_k - \frac{|A_k|}{N_0}\,.
\end{align*}
For $k=1,\dotsc,r$, we have
\[
\Big|p_k - \frac{|A_k|}{N_0}\Big| = \Big|p_k - \frac{\lfloor N_0 p_k\rfloor +1}{N_0}\Big| = \Big| \frac{1- (N_0p_k -\lfloor N_0 p_k\rfloor)}{N_0}\Big|\le \frac{1}{N_0}\,.
\]
Likewise, for $k = r+1,\dotsc, m$ we have
\[
\Big|p_k - \frac{|A_k|}{N_0}\Big| = \Big|p_k - \frac{\lfloor N_0 p_k\rfloor}{N_0}\Big| = \Big| \frac{N_0p_k -\lfloor N_0 p_k\rfloor}{N_0}\Big|\le \frac{1}{N_0}\,.
\]
Combining the previous two inequalities, we obtain that constraint (d) is satisfied for $\epsilon \ge 1/N_0$ for $a=1$.

Following the same procedure for $a=0$, we get that there exists a feasible solution to optimization~\eqref{eq:solver-program} for $\epsilon\ge 1/\min(N_0,N_1)$.

When $N_0 = N_1$, note that in this case we can form a one-to-one matching of units in the control set to units in the treatment set. Take any such matching, and for $i \in N$ let $k_i$ be the outcome of the unit that $i$ is matched to. Set $X^{(1 - Z_i)}_{ik_i} = 1$ and $X^{(1 - Z_i)}_{ik} = 0$ for all $k \neq k_i$. Set $X_{ik}^{(Z_i)} = 1$ if $Y_i^{obs} = y_k$ and $X_{ik}^{(Z_i)} = 0$ otherwise. 
By construction, this solution satisfies constraints~\eqref{eq:solver-program}.a--\eqref{eq:solver-program}.c and also constraints~\eqref{eq:solver-program}.d for $\epsilon = 0$.

\subsection{Constructing the Integer Linear Program in Section~\ref{sec:constructing-the-ip}}
\label{sec:appendix-ip-conversion}

Let $\mathbf{y^{(a)}} \in \mathbb{R}^{|\mathcal{Y}|}$ be the vector of potential outcomes $\mathcal{Y}$. Consider the following matrix and vector variables, with $\mathbf{Y^{(a)}}$ an $N \times (N\cdot K)$ matrix and $\mathbf{X^{(a)}}$ an $(N\cdot K)$ vector:
\begin{equation*}
    \mathbf{Y^{(a)}} := \begin{pmatrix}
    \mathbf{y^{(a)T}}  & 0 & \dots & 0 \\
    0 & \mathbf{y^{(a)T}}  & 0 & \vdots \\
    \vdots & 0 & \ddots & 0 \\
    0 & \dots & 0 & \mathbf{y^{(a)T}}
    \end{pmatrix}
\end{equation*}
%
%
\begin{equation*}
    \mathbf{X^{(a)}} = \begin{pmatrix}
            X^{(a)}_{11} & 
            \dots &
            X^{(a)}_{ik} &
            X^{(a)}_{i(k+1)}& 
            \dots &
            X^{(a)}_{NK}
        \end{pmatrix}^T
\end{equation*}

We further consider the following combined matrices:
\begin{equation*}
    \mathbf{Y} := \begin{pmatrix}
    \frac{1}{N_0}\mathbf{Y^{(0)}}, \frac{1}{N_1}
   \mathbf{Y^{(1)}}
\end{pmatrix} \quad \mathbf{X} := \begin{pmatrix}
    \mathbf{X^{(0)}}\\
    \mathbf{X^{(1)}}
\end{pmatrix}
\end{equation*}
While the notation is clumsy, it allows us to rewrite $\Var_Z[\hat \tau]$  in a solver-friendly way, with $\textbf{X}$ a vector and our objective the quadratic function. Recall that the variance of the difference-in-means estimator has a quadratic closed-form for general assignments: $\Var_Z[\hat\tau] : \{Y_i(1), Y_i(0)\} \rightarrow \mathbf{\tilde Y}^T \mathbf{\Sigma}_{ZZ} \mathbf{\tilde Y}$, where $\mathbf{\tilde Y}$ is the vector of coordinates $\{N_1^{-1} Y_i(1) + N_0^{-1}Y_i(0)\}_{i = 1\dots N}$\,, and
$\mathbf{\Sigma}_{ZZ}$ is the covariance matrix of the random vector $\{Z_i\}_{i = 1\dots N}$, with coordinates $(\mathbf{\Sigma}_{ZZ})_{ij} = \Cov[Z_i, Z_j]$. Rewriting $\Var_Z[\hat\tau]$ in terms of $\mathbf{X}$ and $\mathbf{Y}$, we obtain:
\begin{equation*}
  \Var_Z[\hat \tau] = \mathbf{X^T}  \underbrace{\mathbf{Y^T} \mathbf{\Sigma}_{ZZ}\mathbf{Y}}_\mathbf{Q} \mathbf{X} = \mathbf{X^T Q X}\,.
\end{equation*}
Because $\mathbf{\Sigma}_{ZZ}$ is positive definite for any probabilistic assignment, it can be decomposed into the product of a matrix and its transpose $\mathbf{R^T R}$ by the Cholesky decomposition. Thus, $\mathbf{Q}$ can be written as the positive semi-definite matrix $\mathbf{Q = Y^T R^T R Y}$.

Furthermore, in the objective~\eqref{eq:solver-program} we have products of binary variables of the form $X^{(a)}_{ik} \cdot X^{(b)}_{j\ell}$. We replace this product with a new binary variable $X^{(a,b)}_{ik,j\ell} \in \{0,1\}$, and introduce the following linear constraints to ensure the variable corresponds to the product of the original variables:
\begin{eqnarray*}
X^{(a,b)}_{ik,j\ell} \leq X^{(a)}_{ik} \\
X^{(a,b)}_{ik,j\ell} \leq X^{(b)}_{j\ell} \\
X^{(a,b)}_{ik,j\ell} \geq X^{(a)}_{ik} + X^{(b)}_{j\ell} - 1
\end{eqnarray*}
It is easy to check that $X^{(a,b)}_{ik,j\ell} = 1$ if and only if both $ X^{(a)}_{ik} = 1$ and $X^{(b)}_{j\ell} = 1$. After this transformation, the integer program consists only of binary variables, and the objective and constraints are all linear.

\subsection{Derivations of Section~\ref{sec:linear-in-treatment}}
\label{sec:appendix:linear-in-treatment}

For any assignment mechanism where $N_0, N_1 > 0$, the difference-in-means estimator can be written as:
\begin{align*}
    \hat \tau 
     = \sum_i Z_i \underbrace{\left(\frac{Y_i(1)}{N_1} + \frac{Y_i(0)}{N_0} \right)}_{a_i} \underbrace{- \frac{Y_i(0)}{N_0}}_{b_i}
    = \sum_i  Z_i a_i + b_i\,.
\end{align*}
where $N_1$ is the number of treated units and $N_0$ is the number of controlled units. 
Estimators of this form have expectation $\Exp(\hat \tau) = \sum_i \Exp[Z_i] a_i + b_i$ and we can write their variance as a sum over all assignments:
\begin{align*}
    \Var_Z[\hat\tau] & = \sum_{z} \mathbb{P}(Z = z) \left(\sum_{i=1}^N z_i a_i + b_i - \Exp(\hat \tau)\right)^2
    = \sum_{z} \mathbb{P}(Z = z) \left(\sum_{i=1}^N a_i (z_i - \Exp[Z_i]) \right)^2
\end{align*}
Substituting for $D_i = z_i - \Exp[Z_i]$, we have $\Var_Z[\hat\tau] = \Var_D[\hat\tau]
     = \sum_{d} \mathbb{P}(D = d) \left(\sum_i a_i d_i \right)^2
     = \sum_{d} \mathbb{P}(D = d) \left(\sum_{i,j} d_i d_j a_i a_j   \right)= \sum_{i,j} a_i a_j \Exp[D_i D_j]$, such that the objective function we are trying to maximize with respect to $Y_i(1)$ and $Y_i(0)$ is
\begin{align}
    \Var_Z[\hat\tau] &=  \sum_{i,j} a_i a_j \Cov[Z_i, Z_j]
\end{align}





\subsection{Proof of Theorems in Section~\ref{sec:asymptotics}}
\label{sec:appendix:finite-sample-analysis}

In optimization~\eqref{eq:solver-program}(d) the slackness parameter $\epsilon$ was chosen to allow for a feasible solution. Other than feasibility, we note that while $\mathbb{E}_Z[ F_a^{mis}(y)]  = \mathbb{E}_Z[ F_a^{obs}(y)]$, for a given sample the distributions $F_a^{mis}$ and $F_a^{obs}$ may not match exactly.  In this section, we derive a finite sample analysis of optimization~\eqref{eq:solver-program}. Specifically, the next theorem  provides a lower bound on the probability that for a given sample, the two distributions $F_a^{mis}$ and $F_a^{obs}$ are uniformly close to each other, namely within a distance of at most $\epsilon$. This also gives a rate for the probability of the optimal objective value of~\eqref{eq:solver-program} upper bounding the quantity of interest $\Var_Z[\hat{\tau}]$, as the sample size increases. 

For the readers' convenience we recall the statement of Theorem~\ref{theorem:main_text}.
\begin{theorem}\label{thm:asymp}
Suppose that $Y_i(a)\sim F^*_a$ for $a\in\{0,1\}$ and $i\in [n]$. Also assume that $\{(Y_i(0),Y_i(1))\}_{i\in[N]}$ and the assignment variables $\{Z_i\}_{i\in[N]}$ are independent. Let $V^*$ be the optimal objective value of the problem~\eqref{eq:solver-program}. We have 
\begin{align}
\mathbb{P}\left(V^* \ge \Var_Z[\hat{\tau}]\right) \ge 1- \beta,
\end{align}
where 
\begin{align*}
&\beta:= 8\exp\Big(-\frac{\epsilon^2}{4}N\tilde{P}\Big) + \frac{32}{N^2 \tilde{P}^2} \sum_{i,j\in[N]} \Cov(Z_i,Z_j)\,,\\
&\tilde{P} = \min(P,1-P), \quad P = \frac{1}{N}\sum_{i\in[N]} \mathbb{P}(Z_i = 1)\,.
\end{align*}
\end{theorem}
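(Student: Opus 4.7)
The skeleton of the argument is a feasibility reduction: the ground-truth indicators $X^{(a)*}_{ik}:=\mathbb{I}(Y_i(a)=y_k)$ automatically satisfy constraints~\eqref{eq:solver-program}(a)--(c) and, by Eq.~\eqref{eq:variance-2}, attain the objective value $\Var_Z[\hat\tau]$. Consequently $\mathbb{P}(V^*\ge\Var_Z[\hat\tau])$ is bounded below by the probability that constraint~\eqref{eq:solver-program}(d) holds at $X^{(a)*}_{ik}$, so the task reduces to lower-bounding the probability that the IPW empirical marginals $\hat F_a^{obs}$ and $\hat F_a^{mis}$ from Section~\ref{sec:known_and_proba} are within $\epsilon$ of each other at every $y_k$ and each $a\in\{0,1\}$.

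The independence of $\mathbf Y$ and $\mathbf Z$ gives $\mathbb{E}[\hat F_a^{obs}(y_k)]=\mathbb{E}[\hat F_a^{mis}(y_k)]=F_a(y_k)$, so a triangle inequality against $F_a(y_k)$ reduces matters to bounding $|\hat F_a^{\cdot}(y_k)-F_a(y_k)|$ for each of the four $(a,\text{obs/mis})$ combinations. For each such quantity I would insert the full-information empirical $\bar F_a(y_k):=\frac{1}{N}\sum_i\mathbb{I}(Y_i(a)=y_k)$ and split
\[\hat F_a^{obs}(y_k)-F_a(y_k)\ =\ \underbrace{\hat F_a^{obs}(y_k)-\bar F_a(y_k)}_{\text{randomness in }\mathbf Z\mid\mathbf Y}\ +\ \underbrace{\bar F_a(y_k)-F_a(y_k)}_{\text{i.i.d.\ empirical error}}.\]
The i.i.d.\ piece concentrates exponentially by a Bernstein-style bound, with the inverse-propensity scale $1/\tilde P$ entering through the effective sample size to yield the $\exp(-\epsilon^2 N\tilde P/4)$ factor. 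The conditional piece, given $\mathbf Y(a)$, has mean zero and conditional variance $\tfrac{1}{N^2}\sum_{ij}\mathbb{I}(Y_i(a){=}y_k)\mathbb{I}(Y_j(a){=}y_k)\,\Cov(Z_i,Z_j)/(P_iP_j)$, which is bounded above by $\tfrac{1}{N^2\tilde P^2}\sum_{ij}\Cov(Z_i,Z_j)$ using $|\mathbb{I}(\cdot)|\le 1$ and $\min(P_i,1-P_i)\ge\tilde P$; Chebyshev then produces the polynomial second summand of $\beta$. A union bound over the eight cases (two values of $a$, obs vs.\ mis, and two signs of $|\cdot|$) supplies the constants $8$ and $32$.

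The main technical obstacle is pushing the supremum over $y_k$ through both bounds without picking up an $|\mathcal Y|$ factor. For the i.i.d.\ piece this is handled by a DKW-type uniform concentration of the (inverse-propensity-weighted) empirical CDF. For the Chebyshev piece the $k$-free variance bound displayed above is not automatically uniform in $k$, and this is the subtle step: one either appeals to a maximal inequality for weighted sums of dependent Bernoullis, or slackens constraint~\eqref{eq:solver-program}(d) to well-chosen extremal $y_k$ values whose control implies control of the supremum. The refinement promised after the theorem for individualistic mechanisms then follows by replacing $\sum_{ij}\Cov(Z_i,Z_j)$ with its diagonal contribution (since off-diagonal covariances vanish under independent assignment), giving a sharper rate in $N$.
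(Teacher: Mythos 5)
Your opening reduction is the same as the paper's: the ground-truth indicators satisfy~\eqref{eq:solver-program}(a)--(c) and attain objective value $\Var_Z[\hat\tau]$ by~\eqref{eq:variance-2}, so everything hinges on the probability that constraint (d) holds at the truth, handled via a triangle inequality through $F_a(y_k)$ and a union bound over the four $(a,\text{obs/mis})$ cases. But your concentration argument diverges from the paper's and has a genuine flaw. You condition on $\mathbf{Y}$ and apply Chebyshev to $\hat F_a^{obs}(y_k)-\bar F_a(y_k)$, bounding its conditional variance $\tfrac{1}{N^2}\sum_{i,j}\ind(Y_i(a){=}y_k)\ind(Y_j(a){=}y_k)\Cov(Z_i,Z_j)/(P_iP_j)$ by $\tfrac{1}{N^2\tilde P^2}\sum_{i,j}\Cov(Z_i,Z_j)$ ``using $|\ind(\cdot)|\le 1$.'' That step is invalid: the covariance matrix of $\mathbf{Z}$ has negative off-diagonal entries for most designs, so a weighted quadratic form with weights in $\{0,1\}$ is not dominated by the unweighted one. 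Concretely, for complete randomization $\sum_{i,j}\Cov(Z_i,Z_j)=\Var(N_1)=0$, yet $\Var\bigl(\sum_{i:Y_i(a)=y_k}Z_i\bigr)>0$ for a proper subset of units, so your claimed bound would assert a strictly positive variance is zero. You also flag, but do not resolve, the uniformity over $y_k$ for this Chebyshev piece, and your attribution of the $e^{-\epsilon^2 N\tilde P/4}$ rate to the full-information i.i.d.\ term $\bar F_a - F_a$ (which uses all $N$ draws and carries no $\tilde P$) does not match where that rate actually has to come from.

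The paper sidesteps both problems by conditioning the other way: given $\mathbf{Z}$, the treated outcomes are $N_1$ i.i.d.\ draws from $F_a$, so the DKW inequality gives $\mathbb{P}\bigl(\sup_{y_k}|\sum_i \tfrac{Z_i}{N_1}X^{(a)}_{ik}-F_a(y_k)|>\tfrac{\epsilon}{2}\,\big|\,\mathbf{Z}\bigr)\le 2e^{-N_1\epsilon^2/2}$, which is uniform in $y$ for free. The design's covariance structure then enters only through the scalar $N_1=\sum_iZ_i$: Chebyshev applied to $N_1$ shows $N_1>\tfrac12\sum_iP_i$ except with probability $\tfrac{4}{(\sum_iP_i)^2}\sum_{i,j}\Cov(Z_i,Z_j)$, and here $\sum_{i,j}\Cov(Z_i,Z_j)=\Var(N_1)$ appears legitimately (and correctly vanishes for complete randomization). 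Splitting $\mathbb{E}[e^{-N_1\epsilon^2/2}]$ on this event yields the two terms of $\beta$, and the union bound over $a$ and obs/mis gives the constants $8$ and $32$. If you want to salvage your decomposition, you would need a per-$k$ variance bound that respects the sign structure of $\Cov(Z_i,Z_j)$ together with a maximal inequality over $k$; the paper's route avoids needing either.
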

\begin{proof}
We start by finding a lower bound for the probability that the following statement holds:
\begin{equation*}
    \forall a\in\{0,1\}, \forall k,~\Big|\sum_{i=1}^N \frac{Z_i}{N_1} X^{(a)}_{ik} - F^*_a(y_k)\Big| \leq \frac{\epsilon}{2}
\end{equation*}
Since $\mathbf{Z}$ is independent of $\mathbf{Y}^{(a)}$, the set of $\{Y_i^{(a)}: i\in [N], Z_i = 1\}$ is a set of random draws from $F^*_a$. We condition on $\mathbf{Z}$ and let $N_1 = \sum_{i=1}^N Z_i$. By using the Dvoretzky–Kiefer–Wolfowitz (DKW) inequality\footnote{Note that DWK inequality provides a bound on the worst case distance of an empirically determined ``distribution function'' $\widehat{G}$ from its associated population ``distribution function'' $G$. For a discrete variable, we can write the pmf at a value $y$ as $F(y) = G(y+\delta) - G(y-\delta)$, for a small enough $\delta$ and apply DWK bound on $G(y+\delta)$ and $G-\delta$ separately to get a similar bound for $F(y)$.}, we have that for every $\epsilon>0$,
\[
\mathbb{P}\left( \sup_{y_k} \Big|\sum_{i=1}^N \frac{Z_i}{N_1} X^{(a)}_{ik} - F^*_a(y_k)\Big| > \frac{\epsilon}{2} \bigg| \mathbf{Z}\right)\le 2 e^{-N_1\epsilon^2/2}\,.
\]
By taking expectation with respect to $\mathbf{Z}$ we arrive at
\begin{align}\label{eq:cond-constraint}
\mathbb{P}\left( \sup_{y_k} \Big|\sum_{i=1}^N \frac{Z_i}{N_1} X^{(a)}_{ik} - F^*_a(y_k)\Big| > \frac{\epsilon}{2} \right)\le 2 \mathbb{E}[e^{-N_1\epsilon^2/2}]\,.
\end{align}
The correlation among $Z_i$'s will affect the rate of convergence. 

By using Chebyshev's inequality, we have
\begin{align}\label{eq:chebyshev}
\mathbb{P}\Big(\Big|\sum_{i\in[N]} Z_i - \sum_{i\in[N]} P_i\Big| \ge N\delta\Big)
\le \frac{1}{N^2\delta^2} \sum_{i,j\in[N]}\Cov(Z_iZ_j)\,,
\end{align}
with $P_i = \mathbb{P}(Z_i=1)$. Therefore, 
\begin{align}
\mathbb{P}\Big(N_1 \le \sum_{i\in[N]} P_i - N\delta\Big) \le \frac{1}{N^2\delta^2} \sum_{i,j\in[N]}\Cov(Z_iZ_j)\,.
\end{align}
We set $\delta = \frac{1}{2N}\sum_{i\in[N]}P_i$, we have
\[
\mathbb{P}\Big(N_1 \le \frac{1}{2}\sum_{i\in[N]}P_i\Big) \le \frac{4}{(\sum_i P_i)^2} \sum_{i,j\in[N]}\Cov(Z_i,Z_j)\,.
\]
Let $\mathcal{E}$ be the following probabilistic event:
\[
\mathcal{E}: = \Big\{N_1 >  \frac{1}{2}\sum_{i\in[N]}P_i\Big\}\,.
\]

Using the above bound bound on $\mathbb{P}(\mathcal{E}^c)$, we get
\begin{align}
\mathbb{E}[e^{-N_1\epsilon^2/2}] &\le \exp\Big(-\frac{\epsilon^2}{4}\sum_{i\in[N]}P_i\Big)  \mathbb{P}(\mathcal{E})+ \mathbb{P}(\mathcal{E}^c)\nonumber\\
&\le \exp\Big(-\frac{\epsilon^2}{4}\sum_{i\in[N]}P_i\Big) +\frac{4}{(\sum_i P_i)^2} \sum_{i,j\in[N]}\Cov(Z_i,Z_j)\,.\label{eq:N1-bound}
\end{align}
Combining~\eqref{eq:N1-bound} and~\eqref{eq:cond-constraint} we obtain the following result:
\begin{align}\label{eq:myevent1}
\mathbb{P}\Big( \sup_{y_k} \Big|\sum_{i=1}^N \frac{Z_i}{N_1} X^{(a)}_{ik} - F^*_a(y_k)\Big| > \frac{\epsilon}{2} \Big)&\le 2 \exp\Big(-\frac{\epsilon^2}{4}\sum_{i\in[N]}P_i\Big) + \frac{8}{(\sum_i P_i)^2} \sum_{i,j\in[N]}\Cov(Z_i,Z_j)\nonumber\\
&= 2\exp\Big(-\frac{\epsilon^2}{4} NP\Big) + \frac{8}{N^2P^2} \sum_{i,j\in[N]}\Cov(Z_i,Z_j)\,.
\end{align}
By applying a similar argument (using $1-Z_i$ instead of $Z_i$) we also get
\begin{align}\label{eq:myevent2}
\mathbb{P}\Big( \sup_{y_k} \Big|\sum_{i=1}^N \frac{1-Z_i}{N_0} X^{(a)}_{ik} - F^*_a(y_k)\Big| > \frac{\epsilon}{2} \Big) \le 2\exp\Big(-\frac{\epsilon^2}{4} N(1-P)\Big) + \frac{8}{N^2(1-P)^2} \sum_{i,j\in[N]}\Cov(Z_i,Z_j)\,.
\end{align}

In addition, by triangle inequality we have
\[
\Big|\sum_{i=1}^N \frac{Z_i}{N_1} X^{(a)}_{ik} - \frac{1-Z_i}{N_0} X^{(a)}_{ik}\Big|
\le \Big|\sum_{i=1}^N \frac{Z_i}{N_1} X^{(a)}_{ik} - F^*_a(y_k)\Big| + 
\Big|\sum_{i=1}^N \frac{1-Z_i}{N_0} X^{(a)}_{ik} - F^*_a(y_k)\Big|\,.
\]
Therefore, by union bounding over the two events in~\eqref{eq:myevent1} and \eqref{eq:myevent2} and over $a\in\{0,1\}$ we get that 
\[
\forall k, a,\quad \Big|\sum_{i=1}^N X_{ik}^{(a)} \Big(\frac{Z_i}{N_1} - \frac{1-Z_i}{N_0}\Big)\Big| \le \epsilon\,,
\]
with probability at least $1-\beta$, where
\[
\beta:= 8\exp\Big(-\frac{\epsilon^2}{4}N\tilde{P}\Big) + \frac{32}{N^2 \tilde{P}^2} \sum_{i,j\in[N]} \Cov(Z_i,Z_j)\,,
\]
with $\tilde{P} := \min(P,1-P)$.

Hence, $\mathbf{X}$ is a feasible solution to~\eqref{eq:cond-constraint} with the same probability. In addition, recalling~\eqref{eq:variance-2} we have
\[
V^* = (\mathbf{X}^*)^T \mathbf{Q} \mathbf{X}^* \ge \mathbf{X}^T \mathbf{Q} \mathbf{X} = \Var_Z[\hat{\tau}]\,,
\]
which completes the proof.
\end{proof}
\begin{remark}
Under the setting of Theorem~\ref{thm:asymp}, if the assignment variables $Z_i$ are independent, then we can sharpen the probability bound using Hoeffding inequality in \eqref{eq:chebyshev}. In this case, the failure probability $\beta$ will be given by
\[
\beta = 8\exp\Big(-\frac{\epsilon^2}{4}N\tilde{P}\Big) + 8 \exp\Big(-\frac{1}{2}N\tilde{P}^2\Big)\,.
\]
\end{remark}
{\bf Confounded independent treatment assignments.} We next prove a similar result for the case that the treatment variable can be correlated with the outcomes through a confounding factor.

For the readers' convenience we recall the statement of Theorem~\ref{theorem:main-confounded}.

\begin{theorem}\label{thm:confounded}
Suppose that $(Y_i(0),Y_i(1),Z_i)$ are i.i.d across $i\in[n]$ and $Y_i(a)\sim F^*_a$ for $a\in\{0,1\}$. In addition, suppose that
\[
\left|\frac{\mathbb{P}(Z=a|Y(0), Y(1))}{\mathbb{P}(Z=a)} -1\right|\le \delta\,,
\]
for some $\delta>0$. Note that $\delta$ controls the effect of confoundedness, with $\delta = 0$ corresponding to unconfoundedness. Let $V^*$ be the optimal objective value of the optimization problem~\eqref{eq:solver-program}, and suppose that the slackness parameter $\epsilon$, in constraint (d) satisfies $\epsilon\ge \delta$. We have 
\begin{align}
\mathbb{P}\left(V^* \ge \Var_Z[\hat{\tau}]\right) \ge 1- \beta,
\end{align}
where 
\begin{align*} 
&\beta:= 1- 8e^{-2N\epsilon_0^2} - 8e^{-N\tilde{P}\epsilon_1^2/3}\,,\\
&\tilde{P}=\min(P,1-P), \quad \epsilon_0 = \frac{(\epsilon-\delta)\tilde{P}(1+\delta)}{2+\delta+\epsilon},\quad 
\epsilon_1 = \frac{\epsilon-\delta}{2+\delta+\epsilon}\,. 
\end{align*}
\end{theorem}
\begin{proof}
Recall that $X_{ik}^{(a)} := \ind(Y_i(a) = y_k)$. We first establish the following lemma.
\begin{lemma}\label{lem:conditional:XZ}
Under the assumption of Theorem~\ref{thm:confounded} we have
\[
\left|\frac{\mathbb{P}(Z_i=a|X_{ik}^{(a)})}{\mathbb{P}(Z_i=a)} -1\right|\le \delta\,.
\]
\end{lemma}
We define the probabilistic event $\mathcal{E}_a$ as follows, for a fixed arbitrary $\epsilon_0>0$:
\[
\mathcal{E}_a: = \left\{\sup_{y_k}\left|\sum_{i=1}^N \frac{Z_i }{N}X_{ik}^{(a)}- \mathbb{P}(Z=1|X_{k}^{(a)})\; F^*_a(y_k)\right| \le  \epsilon_0\right\}\,.
\]
Note that by assumption $Z_i|X_{ik}^{(a)}$ are i.i.d and we denote their distribution by dropping the index $i$ as $Z|X_k^{(a)}$.

By an application of Dvoretzky–Kiefer–Wolfowitz (DKW) inequality, similar to the proof of Theorem~\ref{thm:asymp}, we have $\mathbb{P}(\mathcal{E}_a)\le 2e^{-2N\epsilon_0^2}$.

We also define the probabilistic event $\widetilde{\mathcal{E}}$ as follows:
\begin{align}
\widetilde{\mathcal{E}}:= \left\{\left|\frac{1}{N}\sum_{i=1}^N Z_i - \mathbb{P}(Z=1)\right|\le \epsilon_1\; \mathbb{P}(Z=1) \right\}\,. 
\end{align}
Recall the notation $N_1:= \sum_{i=1}^N Z_i$ and let $P = \mathbb{P}(Z=1)$. Then, by an application of the multiplicative Chernoff bound we have $\mathbb{P}(\widetilde{\mathcal{E}})\le 2e^{-NP\epsilon_1^2/3}$.

Therefore, on the event $\mathcal{E}\cap \widetilde{\mathcal{E}}$, 
we have
\begin{align*}
\sum_{i=1}^N \frac{Z_i}{N_1} X_{ik}^{(a)} &= \frac{\sum_{i=1}^N Z_i X_{ik}^{(a)}}{\sum_{i=1}^N Z_i}\\
&\le \frac{\left|\sum_{i=1}^N \frac{Z_i}{N} X_{ik}^{(a)} - \mathbb{P}(Z=1|X_k^{(a)})F^*_a(y_k)\right| + \mathbb{P}(Z=1|X_k^{(a)})F^*_a(y_k)}{ \mathbb{P}(Z=1) - \left| \mathbb{P}(Z=1) - \frac{1}{N}\sum_{i=1}^N Z_i\right|}\\
&\le \frac{\epsilon_0 + \mathbb{P}(Z=1|X_k^{(a)})F^*_a(y_k)}{P(1-\epsilon_1)}\\
&\le \frac{\epsilon_0}{P(1-\epsilon_1)} + \frac{1+\delta}{1-\epsilon_1} F^*_a(y_k)\,,
\end{align*}
where in the last step we used the result of Lemma~\ref{lem:conditional:XZ}, by which $\mathbb{P}(Z=1|X_k^{(a)})/P\le 1+\delta$.

Likewise, on the event $\mathcal{E}\cap \widetilde{\mathcal{E}}$, 
we have
\begin{align*}
\sum_{i=1}^N \frac{Z_i}{N_1} X_{ik}^{(a)} 
&\ge \frac{ \mathbb{P}(Z=1|X_k^{(a)})F^*_a(y_k)- \epsilon_0}{P(1+\epsilon_1)}\\
&\ge  \frac{1-\delta}{1+\epsilon_1} F^*_a(y_k) - \frac{\epsilon_0}{P(1+\epsilon_1)}\,,
\end{align*}
Therefore, on the event $\mathcal{E}\cap \widetilde{\mathcal{E}}$, 
we have
\begin{align*}
&\sup_{y_k} \left|\sum_{i=1}^N \frac{Z_i}{N_1} X_{ik}^{(a)} - F^*_a(y_k)\right|\\
&\le 
\max\left\{\frac{\epsilon_0}{P(1-\epsilon_1)} + \left(\frac{1+\delta}{1-\epsilon_1} -1\right)F^*_a(y_k),
\frac{\epsilon_0}{P(1+\epsilon_1)} + \left(1-\frac{1-\delta}{1+\epsilon_1}\right) F^*_a(y_k)\right\}\\
&\le 
\frac{\epsilon_0}{P(1-\epsilon_1)} + \frac{\epsilon_1+\delta}{1-\epsilon_1}
\,,
\end{align*}
since $F^*_a(y_k)\le 1$. In addition, by union bounding, 
\[
\mathbb{P}(\mathcal{E}\cap\widetilde{\mathcal{E}}) \ge 1 - \mathbb{P}(\mathcal{E}^c) - \mathbb{P}(\widetilde{\mathcal{E}}^c)
\ge 1- 2e^{-2N\epsilon_0^2} - 2e^{-NP\epsilon_1^2/3}\,.
\]
By following a similar argument (replacing the role of $Z_i$ by $1-Z_i$), we obtain that
\begin{align*}
\sup_{y_k} \left|\sum_{i=1}^N \frac{1-Z_i}{N_0} X_{ik}^{(a)} - F^*_a(y_k)\right|\le 
\frac{\epsilon_0}{(1-P)(1-\epsilon_1)} + \frac{\epsilon_1+\delta}{1-\epsilon_1}
\,,
\end{align*}
with probability at least $1- 2e^{-2N\epsilon_0^2} - 2e^{-N(1-P)\epsilon_1^2/3}$.

Therefore, by triangle inequality and union bounding over $a\in\{0,1\}$ we get that
\begin{align}\label{eq:feasibility-confounded}
\forall k,a,\quad \left|\sum_{i=1}^N X_{ik}^{(a)} \left(\frac{Z_i}{N_1} - \frac{1-Z_i}{N_0}\right) \right| \le 2\left[\frac{\epsilon_0}{\tilde{P}(1-\epsilon_1)} + \frac{\epsilon_1+\delta}{1-\epsilon_1} \right]\,,
\end{align}
with probability at least $1-\beta$, where
\[
\beta:= 1- 8e^{-2N\epsilon_0^2} - 8e^{-N\tilde{P}\epsilon_1^2/3}
\]
and $\tilde{P} = \min(P,1-P)$.

We next set $\epsilon_0$, $\epsilon_1$ such that the right-hand side of \eqref{eq:feasibility-confounded} becomes $\epsilon$. Namely, we set
\[
\epsilon_0 = \frac{(\epsilon-\delta)\tilde{P}(1+\delta)}{2+\delta+\epsilon},\quad 
\epsilon_1 = \frac{\epsilon-\delta}{2+\delta+\epsilon}\,.
\]
This way, $\mathbf{X}$ is a feasible solution to~\eqref{eq:cond-constraint} with probability at least $1-\beta$, and so
\[
V^* = (\mathbf{X}^*)^T \mathbf{Q} \mathbf{X}^* \ge \mathbf{X}^T \mathbf{Q} \mathbf{X} = \Var_Z[\hat{\tau}]\,.
\]
\end{proof}
\begin{proof}(Proof of Lemma~\ref{lem:conditional:XZ})
By definition, $\mathbb{P}(Z_i = a|X_{ik}^{(a)}=1) = \mathbb{P}(Z_i = a|Y_i(a) = y_k)$. Therefore, from the assumption we have
\[
\left|\frac{\mathbb{P}(Z_i=a|X_{ik}^{(a)} = 1)}{\mathbb{P}(Z_i=a)} -1\right|\le \delta\,.
\]
For the other case we write
\begin{align*}
\mathbb{P}(Z_i=1|X_{ik}^{(a)} = 0) = \frac{\mathbb{P}(Z_i=1, X_{ik}^{(a)} = 0)}{\mathbb{P}(X_{ik}^{(a)} = 0)}=
\frac{\sum_{\ell\neq k}\mathbb{P}(Z_i=1| Y_i(a) = y_\ell) \mathbb{P}(Y_i(a)=y_\ell)}{\mathbb{P}(X_{ik}^{(a)} = 0)}
\end{align*}
Expanding the denominator as $\mathbb{P}(X_{ik}^{(a)} = 0) = \sum_{\ell\neq k} \mathbb{P}(Y_i{(a)} = y_\ell)$, we get
\begin{align*}
\frac{\mathbb{P}(Z_i=1|X_{ik}^{(a)} = 0)}{\mathbb{P}(Z_i = a)} =
\frac{\sum_{\ell\neq k}\mathbb{P}(Z_i=1| Y_i(a) = y_\ell) \mathbb{P}(Y_i(a)=y_\ell)}{\sum_{\ell\neq k}\mathbb{P}(Z_i = a) \mathbb{P}(Y_i{(a)} = y_\ell)}\,.
\end{align*}
Define the shorthand $a_\ell:= \mathbb{P}(Z_i=1| Y_i(a) = y_\ell) \mathbb{P}(Y_i(a)=y_\ell)$ and $b_\ell: = \mathbb{P}(Z_i = a) \mathbb{P}(Y_i{(a)} = y_\ell)$. Then by our assumption $|\frac{a_\ell}{b_\ell}-1|\le\delta$, or equivalently, $(1-\delta)b_\ell\le a_{\ell}\le (1+\delta)b_{\ell}$.  This implies that
\begin{align*}
1-\delta \le \frac{\mathbb{P}(Z_i=1|X_{ik}^{(a)} = 0)}{\mathbb{P}(Z_i = a)} = \frac{\sum_{\ell\neq k} a_\ell}{\sum_{\ell\neq k} b_\ell} \le 1+\delta\,.
\end{align*}
This concludes the proof of lemma.
\end{proof}





\subsection{Derivations of Section~\ref{sec:regression_estimators}}\label{app:regression_estimators}

We first prove that the variance of estimators of the form $\hat \tau = \sum_i b_i + \sum_j Z_i Z_j a_{ij}$ can be written as a sum of $N^4$ products of $a_{ij}$ and $a_{kl}$ terms:
\begin{align*}
    \Var_Z[\hat \tau]
    & = \sum_z \mathbb{P}(Z = z) \left(\sum_i b_i + \sum_j Z_i Z_j a_{ij} - \sum_i b_i - \sum_j \mathbb{E}[Z_i Z_j] a_{ij}  \right)^2 \\
    & = \sum_z \mathbb{P}(Z = z) \left(\sum_{i,j} \left(Z_i Z_j - \mathbb{E}[Z_i Z_j]\right) a_{ij}  \right)^2 \\
    & =  \sum_{i,j,k,l} a_{ij} a_{kl} \Cov \left[ Z_i Z_j, Z_k Z_l\right]
\end{align*}

We now prove that the regression estimator on $Z$ and a covariate $X$ (later rewritten to $W$ to avoid confusion with the indicator variables) can be parameterized in such a way.

Suppose that $f$ is a linear regression with covariate $X$. The exact model is $Y_i = \beta_0 + \beta_1 D_i + \beta_2 X_i$, where $D_i = Z_i - \frac{N_1}{N}$. 
$\beta_1$ gives the average treatment effect. Without loss of generality, assume $\tfrac{1}{N}\sum_i X_i = 0$ (any nonzero mean can be absorbed into the intercept). We can analytically compute the hat matrix, which is simplified by the fact that $\sum_i D_i = 0$ and $\sum_i X_i = 0$.
\begin{align*}
    X^T X &=
    \begin{bmatrix}
    N & 0 & 0 \\
    0 & \frac{N_1 N_0}{N} & \sum_i X_i D_i \\
    0 & \sum_i X_i D_i & \sum_i (X_i^2)
    \end{bmatrix}
\end{align*}
We compute the inverse of this matrix using Gaussian elimination. Because we only need the second row of the inverse, it suffices to only partially solve the Gaussian elimination. We only compute the first two rows of the inverse:
\begin{align*}
    (X^T X)^{-1} &=
    \begin{bmatrix}
    \frac{1}{N} & 0 & 0\\
    0 & \frac{N}{N_1 N_0} & -\frac{N}{N_1 N_0}\cdot \frac{\sum_i X_i D_i}{\sum_i (X_i^2)}\\
    0 & ? & ?
    \end{bmatrix}
\end{align*}
The average treatment effect estimate $\hat\beta_1$ is given by
\begin{align*}
    \hat \tau = [(X^T X)^{-1} X^T Y^{obs}]_2
    &=
    \begin{bmatrix}
    0 & \frac{N}{N_1 N_0} & -\frac{N}{N_1 N_0}\cdot \frac{\sum_i X_i D_i}{\sum_i (X_i^2)}
    \end{bmatrix}
    \begin{bmatrix}
    \sum_i Y_i^{obs}\\
    \sum_i Y_i^{obs} D_i\\
    \sum_i Y_i^{obs} X_i
    \end{bmatrix}\\
    &= \frac{N}{N_1 N_0} \cdot \left( \sum_i Y_i^{obs} D_i \right) -\frac{N}{N_1 N_0}\cdot \left(\frac{\sum_i X_i D_i}{\sum_i (X_i^2)}\cdot \sum_i Y_i^{obs} X_i \right) \\
    &= \frac{N}{N_1 N_0} \sum_i D_i \left(Y_i^{obs} - X_i \frac{\sum_j X_j Y_j^{obs}}{\sum_j X_j^2} \right) 
\end{align*}
Next, we perform two algebraic manipulations to write our estimator in terms of the potential outcomes (instead of $Y_i^{obs}$).
\begin{align*}
    Y_i^{obs} & =  Z_i Y_i(1) + (1 - Z_i) Y_i(0) =: Z_i \hat Y_i + Y_i(0) \\
    D_i Y_i^{obs} &= \left( Z_i - \frac{N_1}{N} \right) \left(Z_i Y_i(1) + (1 - Z_i) Y_i(0) \right) \\
    & = Z_i Y_i(1) - \frac{N_1}{N} Z_i Y_i(1) - \frac{N_1}{N}Y_i(0) + Z_i Y_i(0) \frac{N_1}{N} \\
    &= Z_i \left(Y_i(1)  \frac{N_0}{N} + Y_i(0) \frac{N_1}{N} \right) - \frac{N_1}{N}Y_i(0) \\
     &= Z_i \frac{N_1N_0}{N}  \left(\frac{Y_i(1)}{N_1} + \frac{Y_i(0)}{N_0} \right) - \frac{N_1}{N} Y_i(0) =: Z_i \frac{N_1N_0}{N} \check{Y}_i - \frac{N_1}{N}Y_i(0)
\end{align*}

where we introduce the notation $\hat Y_i := Y_i(1) - Y_i(0)$  and $\check Y_i := \frac{Y_i(1)}{N_1} + \frac{Y_i(0)}{N_0}$. 

\begin{align*}
    \hat\tau &= \tfrac{N}{N_1 N_0} \sum_i \left( 
    D_i Y_i^{obs} - \tfrac{1}{\mathbf{X}^T\mathbf{X}} \sum_j D_i X_i X_j Y_j^{obs}
    \right)\\
    &= \tfrac{N}{N_1 N_0} \sum_i \left(
    Z_i \tfrac{N_1 N_0}{N} \check Y_i - \tfrac{N_1}{N} Y_i(0) - \tfrac{1}{\mathbf{X}^T\mathbf{X}}\sum_j D_i X_i X_j (Z_j \hat Y_j + Y_j(0))
    \right)\\
    &= \tfrac{N}{N_1 N_0} \sum_i \left( -\tfrac{N_1}{N} Y_i(0) \right) +
    \tfrac{N}{N_1 N_0} \sum_i \left(
    Z_i \tfrac{N_1 N_0}{N} \check Y_i - \tfrac{1}{\mathbf{X}^T\mathbf{X}} \sum_j (Z_i - \tfrac{N_1}{N}) X_i X_j (Z_j \hat Y_j + Y_j(0))
    \right)\\
    %
    &= -\tfrac{1}{N_0} \sum_i Y_i(0) \\
    & \qquad + \tfrac{N}{N_1 N_0} \tfrac{1}{\mathbf{X}^T\mathbf{X}} \sum_i \sum_j \left(
    Z_i \mathbf{X}^T\mathbf{X} \tfrac{N_1 N_0}{N^2} \check Y_i
    -
        X_i X_j \left( 
        Z_i Z_j \hat Y_j + Z_i Y_j(0) - \tfrac{N_1}{N} Z_j \hat Y_j - \tfrac{N_1}{N} Y_j(0)
        \right)
    \right)\\
    &= -\tfrac{1}{N_0} \sum_i Y_i(0) -\tfrac{1}{N_0} \tfrac{1}{\mathbf{X}^T\mathbf{X}} \sum_i \sum_j X_i X_j Y_j(0)\\&\qquad+ \tfrac{N}{N_1 N_0} \tfrac{1}{\mathbf{X}^T\mathbf{X}} \sum_i \sum_j \left(
    Z_i \mathbf{X}^T\mathbf{X} \tfrac{N_1 N_0}{N^2} \check Y_i
    -
        X_i X_j \left( 
        Z_i Z_j \hat Y_j + Z_i Y_j(0) - \tfrac{N_1}{N} Z_j \hat Y_j
        \right)
    \right)\\
\end{align*}
It follows that
\begin{align*}
 \hat\tau
    &=  \sum_i \underbrace{-\tfrac{1}{N_0}\left(Y_i(0)  X_i\tfrac{\mathbf{1}^T\mathbf{X}}{\mathbf{X}^T\mathbf{X}} \right)}_{b_i} + \\
    & \qquad + \tfrac{N}{N_1 N_0} \tfrac{1}{\mathbf{X}^T\mathbf{X}} \sum_i \sum_j \left(
    Z_i \mathbf{X}^T\mathbf{X} \tfrac{N_1 N_0}{N^2} \check Y_i
    -
        X_i X_j \left( 
        Z_i Z_j \hat Y_j + Z_i Y_j(0) - \tfrac{N_1}{N} Z_j \hat Y_j
        \right)
    \right)
\end{align*}
Because the double sum contains all combinations of $i$ and $j$, we can reindex $i$ and $j$ in the term $\tfrac{N_1}{N}Z_j\hat Y_j X_i X_j$:
\begin{align*}
    \hat\tau &=
   \sum_i b_i + \tfrac{N}{N_1 N_0} \tfrac{1}{\mathbf{X}^T\mathbf{X}} \sum_i \sum_j \left(
    Z_i \mathbf{X}^T\mathbf{X} \tfrac{N_1 N_0}{N^2} \check Y_i
    -
        X_i X_j \left( 
        Z_i Z_j \hat Y_j + Z_i Y_j(0) - \tfrac{N_1}{N} Z_i \hat Y_i
        \right)
    \right)\\
    &=
   \sum_i b_i + \tfrac{N}{N_1 N_0} \tfrac{1}{\mathbf{X}^T\mathbf{X}} \sum_i \sum_j \left(
        Z_i \left(
            \mathbf{X}^T\mathbf{X} \tfrac{N_1 N_0}{N^2} \check Y_i
            - X_i X_j Y_j(0)
            + \tfrac{N_1}{N} \hat Y_i
        \right)
        - 
            Z_i Z_j X_i X_j \hat Y_j
    \right)
\end{align*}

We see we can write $\hat\tau = \sum_i b_i + \sum_i \sum_j a_{ij} Z_i Z_j$, where
\begin{align*}
    b_i &= -\tfrac{1}{N_0}\left(Y_i(0) + X_i\tfrac{\mathbf{1}^T\mathbf{X}}{\mathbf{X}^T\mathbf{X}} \right)\\
    a_{ij} &=
    \begin{cases}
     \tfrac{1}{N} \check Y_i - \tfrac{N}{N_1 N_0} \tfrac{1}{ \mathbf{X}^T\mathbf{X}} X_i X_j Y_j(0) + \tfrac{1}{N_0} \tfrac{1}{ \mathbf{X}^T\mathbf{X}} \hat Y_i&\text{if}~i=j\\
    -\tfrac{N}{N_1 N_0} \tfrac{1}{ \mathbf{X}^T\mathbf{X}}X_i X_j \hat Y_j&\text{otherwise}
    \end{cases}
\end{align*}
Substituting in our definitions of $\check Y_i$ and $\hat Y_i$, we have
\begin{align*}
    b_i &= -\tfrac{1}{N_0}\left(Y_i(0) + X_i\tfrac{\mathbf{1}^T\mathbf{X}}{\mathbf{X}^T\mathbf{X}} \right)\\
    a_{ij} &=
    \begin{cases}
      \tfrac{1}{N_1 N} Y_i(1) + \tfrac{1}{N_0 N} Y_i(0)  - \tfrac{N}{N_1 N_0} \tfrac{1}{ \mathbf{X}^T\mathbf{X}} X_i X_j Y_j(0) + \tfrac{1}{N_0} \tfrac{1}{ \mathbf{X}^T\mathbf{X}} \left( Y_i(1) - Y_i(0) \right)&\text{if}~i=j\\
    -\tfrac{N}{N_1 N_0} \tfrac{1}{ \mathbf{X}^T\mathbf{X}}X_i X_j \left( Y_i(1) - Y_i(0) \right)&\text{otherwise}
    \end{cases}
\end{align*}

\subsection{Conservative Variance Bounds for Section~\ref{sec:simulation}}
\label{sec:cons-var-bounds}

Here we make explicit the analytical, conservative variance estimators used as baselines for the complete randomization and matched pairs designs in Section~\ref{sec:simulation}. Recall that $N_0$, $N_1$ are the number of units in the control and treatment sets, respectively, that $N = N_0 + N_1$ is the total number of units, and that $Z_i \in \{0, 1\}$ is the treatment indicator for unit $i$. For $z \in \{0, 1\}$, we define the sample variance estimator
$$
s_z^2 = \tfrac{1}{N_z-1} \sum_{i: Z_i = z} \left(Y_i(z) - \tfrac{1}{N_z}\sum_{i: Z_i = z} Y_i(z)\right)^2.
$$
An estimator for the sampling variance of diff-in-means under complete randomization is as follows; see Theorem 6.3 in~\cite{imbens2015causal}.
$$
\frac{s_0^2}{N_0} + \frac{s_1^2}{N_1}. 
$$
Comparing to~\eqref{eqn:variance_decomp}, we see that this estimator is a conservative upper bound for the true sampling variance as $S_{\tau}^2 \geq 0$.

For the matched pairs design, let $\hat{\tau}(j)$ be the difference between the outcomes for the treatment and control units in pair $j$, for $j \in [N/2].$ A conservative estimator for the sampling variance of diff-in-means under the matched pairs design is as follows; see Theorem 10.1 in~\cite{imbens2015causal}.
$$
\frac{4}{N (N-2)} \sum_{j=1}^{N/2}\left(
\hat{\tau}(j) - \frac{1}{N/2} \sum_{j=1}^{N/2} \hat{\tau}(j)
\right).
$$

\subsection{Variance of the Difference-in-means Estimator under a Completely Randomized Assignment}
\label{sec:appendix:full-dim-formulas}

The variance of the difference-in-means estimator is given by
\begin{align}
    \Var_Z[\hat \tau_{DIM}] = \frac{S_1^2}{N_1} + \frac{S_0^2}{N_0} - \frac{S_\tau^2}{N},
\end{align}
where $S_z^2 = \tfrac{1}{N-1}\sum_{i=1}^{N}\left(Y_i(z) - \tfrac{1}{N}\sum_{i=1}^N Y_i(z)\right)^2$
is the sample variances of the treated ($z=1$) and control ($z=0$) units, while
%
$
    S_\tau^2  := \tfrac{1}{N-1}\sum_{i=1}^{N}\left(Y_i(1) - Y_i(0) - \tfrac{1}{N}\sum_{i=1}^N \left(Y_i(1)-Y_i(0)\right)\right)^2
$
%
is the sample variance of the unit-level treatment effects $Y_i(1) - Y_i(0)$.

\subsection{Constructing confidence intervals} 
\label{sec:appendix:constructing-cis}
%
%
We propose two different ways of constructing confidence intervals from the optimal value $V^*$ of the quadratic program above. The first approach is to use a normal asymptotic approximation. In this case, the $(1-\alpha)$-confidence interval is computed as $\hat{\tau}\pm (\sqrt{V^*/N})\cdot z_{1-\alpha/2}$ with $z_{1-\alpha/2}$ being the $(1-\alpha/2)$-quantile of the standard normal distribution.
This approach is justified in some settings. In particular, \citet*{abadie2020sampling} study the asymptotic validity of this approach for linear regression. In other settings, it can be viewed as an approximation, the properties of which can be investigated by running simulations or A/A-tests.
Another approach is to employ a bootstrap procedure similar to the one proposed in~\citet{Imbens2021-ep}, in which unobserved potential outcomes are imputed via the least favorable copula and bootstrap samples are drawn via rerandomization of units to treatment. Confidence intervals are constructed based on the quantiles of the bootstrap distribution of $\hat\tau$, as illustrated in Algorithm~\ref{alg:cap}.
Many variants of the bootstrap can also be applied to a causal bootstrap; for example
\citet*{Imbens2021-ep} bootstrap the $t$-statistic which is asymptotically pivotal and is known to lead to an asymptotic refinement of the confidence interval in standard settings \citep{hall2013bootstrap}.

\begin{algorithm}[ht]
\caption{Our Causal Bootstrap Procedure}\label{alg:cap}
\begin{algorithmic}
\STATE \textbf{Require:} for $a \in \{0,1\}:~\mathbf{y(a)}$,  vector representation of $\supp(F_a)$; $\mathbf{X}^{(a)} \text{,~solution to Eq.~\ref{eq:objective}}$; design $\mathcal{P}(\mathbf{Z})$.
\STATE \textbf{Initialize:} $S \leftarrow \emptyset$
\FOR{$a \in \{0,1\}$} 
\STATE $\mathbf{Y}(a) \leftarrow (\mathbf{y^{(a)})^T X^{(a)}}$;
\ENDFOR
\FOR{some large number of repetitions}
\STATE Sample $\mathbf{Z} \sim \mathcal{P}(\mathbf{Z})$~and~$S \leftarrow S \cup \{\hat\tau(\mathbf{Z}, \mathbf{Y}(a)) \}$
\ENDFOR
\STATE \textbf{Return} $[q_{\alpha/2}, q_{1 - \alpha/2}]$ ($q_x$ is the $x^{th}$ quantile of $S$).
\end{algorithmic}
\end{algorithm}

In Algorithm~\ref{alg:cap}, $\supp(F_a)$ is the support of the marginal distributions of outcomes $F_a$. 
If $\supp(F_a)$ is not known (e.g. continuous outcomes), we can use the support of the observed marginal distributions $\supp(F^{obs}_a)$ in Algorithm~\ref{alg:cap}. 
We revisit the validity of this assumption in~\ref{sec:asymptotics}, and the specific parameterization of the solution $\mathbf{X}$ in section~\ref{sec:constructing-the-ip}.

\subsection{Understanding the undercoverage of bootstrap-based methods}
\label{sec:appendix:undercoverage}

Copula-based methods are liable to suffer from undercoverage in small samples. This is true of the isotone copula method proposed by~\citet{aronow2014sharp} and used by~\citet{Imbens2021-ep}, as well as our own method. These concerns diminish in larger samples. Consider the following scenario:
\begin{table}[ht]
\centering
\begin{tabular}{c c c c}
    Unit  & Pair  & $Y_0$ & $Y_1$\\
    \hline
    A & 1 & 11 & 11 \\
    B & 1 & 10 & 10 \\
    C & 2 & 0 & 0 \\
    D & 2 & 0 & 0 
\end{tabular}
\end{table}
The true effect is $0$. Regardless of the realized randomization, any copula that matches the marginals will lead to a non-zero ATE deterministically. For example, in each of the observed potential outcome tables below,

{
\begin{tabular}{ccc}
\centering
\begin{tabular}{c c c c}
    Unit  & Pair  & $Y_0$ & $Y_1$\\
    \hline
    A & 1 & X & 11 \\
    B & 1 & 10 & X \\
\end{tabular}
&
\begin{tabular}{c c c c}
    Unit  & Pair  & $Y_0$ & $Y_1$\\
    \hline
    A & 1 & 11 & X \\
    B & 1 & X & 10 \\
\end{tabular}
&
\dots
\end{tabular}
}

the copula will always be imputed as either:

\begin{tabular}{cc}
\begin{tabular}{c c c c}
    Unit  & Pair  & $Y_0$ & $Y_1$\\
    \hline
    A & 1 & 10 & 11 \\
    B & 1 & 10 & 11 \\
\end{tabular}
&
\begin{tabular}{c c c c}
    Unit  & Pair  & $Y_0$ & $Y_1$\\
    \hline
    A & 1 & 11 & 10 \\
    B & 1 & 11 & 10
\end{tabular}
\end{tabular}

In either case, the samples of the estimated ATE will be $0.5$ or $-0.5$ deterministically, leading to a coverage of zero. These issues disappear in larger samples. We refer the reader to our asymptotic validity results for theoretical guarantees in this setting.



\end{document}